\newif\ifdraft
\xpretocmd{\footnote}{\unskip}{}{}
\theoremstyle{plain}
\newtheorem{theorem}{Theorem}
\newtheorem*{theorem-non}{Theorem}
\newtheorem{proposition}{Proposition}
\newtheorem{lemma}{Lemma}
\theoremstyle{definition}
\newtheorem{definition}{Definition}
\theoremstyle{remark}
\newtheorem{conjecture}{Conjecture}
\newcommand*{\hmajority}[1]{\texorpdfstring{$#1$}{#1}-\textsc{Majority}\xspace}
\newcommand*{\twochoice}{2-\textsc{Choices}\xspace}
\newcommand*{\voter}{\textsc{Voter}\xspace}
\newcommand*{\voterAlternative}{\textsc{Polling}\xspace}
\def\AC/{\textsc{AC}}
\def\AClong/{Anonymous Consensus}
\def\aclong/{anonymous consensus}
\newcommand*{\aggfvoter}{\alpha^{({\mathcal{V}})}}
\newcommand*{\aggfhmajority}[1]{\alpha^{(#1M)}}
\newcommand*{\sortDown}[1]{{#1}^{\downarrow}}
\newcommand*{\Mult}{\operatorname{Mult}}
\newcommand\E[1]{\mathbb{E}\left[\,#1\,\right]}
\title{Ignore or Comply? On Breaking Symmetry in Consensus}
\author[3,5]{Petra Berenbrink}
\author[4]{Andrea Clementi}
\author[6]{Robert Elsässer}
\author[5]{Peter Kling\thanks{%
    E-mail: \email{peter.kling@uni-hamburg.de};
    Phone: +49 40 42883-2407
    }
}
\author[1,3]{Frederik Mallmann-Trenn}
\author[2]{Emanuele Natale}
\affil[1]{École normale supérieure, Paris, France}
\affil[2]{Max Planck Institute for Informatics, Saarbrücken, Germany}
\affil[3]{Simon Fraser University, Burnaby, Canada}
\affil[4]{Università di Roma Tor Vergata, Rome, Italy}
\affil[5]{University of Hamburg, Hamburg, Germany}
\affil[6]{University of Salzburg, Salzburg, Austria}
\begin{document}
\maketitle

\thispagestyle{empty}

\begin{abstract}

We study \emph{consensus processes} on the complete graph of $n$ nodes.
Initially, each node supports one from opinion from a set of up to $n$ different opinions.
 Nodes randomly and in
parallel sample the opinions of constant many nodes. Based on these samples,
they use an \emph{update rule} to change their own opinion. The goal is to reach
\emph{consensus}, a configuration where all nodes support the same opinion.

We compare two well-known update rules: \twochoice and \hmajority{3}. In the
former, each node samples two nodes and adopts their opinion if they agree. In
the latter, each node samples three nodes: If an opinion is supported by at
least two samples the node adopts it, otherwise it randomly adopts one of the
sampled opinions.
Known results for these update rules focus on initial configurations with a
limited number of colors (say $n^{\frac 13}$), or typically assume a bias,
where one opinion has a much larger support than any other. For such biased 
configurations, the time to reach consensus is roughly the same for \twochoice
and \hmajority{3}. 

Interestingly, we prove that this is no longer true for configurations with a
large number of initial colors. In particular, we show that \hmajority{3}
reaches consensus with high probability in $\LDAUOmicron[small]{n^{3/4} \cdot
\log^{7/8} n}$ rounds, while \twochoice can need $\LDAUOmega{n / \log n}$
rounds. We thus get  the first unconditional sublinear bound for \hmajority{3}
and the first result separating the consensus time of these processes. Along
the way, we develop a framework that allows a fine-grained comparison between
consensus processes from a specific class. We believe that this framework might
help to classify the performance of more consensus processes.

\bigskip

\noindent
\textbf{Keywords: } Distributed Consensus; Randomized Protocols; Majorization Theory; Leader Election

\end{abstract}

\clearpage
\setcounter{page}{1}



\textcolor{red}{
}

\section{Introduction}

We study \emph{consensus} (also known as agreement) processes resulting from
executing a simple algorithm in a distributed system. The system consists of $n$
anonymous nodes connected by a complete graph. Initially, each node supports one
opinion from the set $\intcc{k} \coloneqq \set{1, \dots, k}$. We refer to these
opinions as \emph{colors}. The system state is modeled as a \emph{configuration}
vector $c \in \N_0^k$, whose $i$-th component $c_i$ denotes the number (support) of nodes
with color $i$. A consensus process is specified by an \emph{update rule} that
is executed by each node. The goal is to reach a configuration in which all
nodes support the same color; the special case where nodes have pairwise
distinct colors is leader election, an important primitive in distributed
computing. We assume a severely restricted synchronous communication mechanism
known as \emph{Uniform Pull}~\cite{DGHILSSST87, KSSV00, KDG03}. Here, in each
discrete round, nodes independently pull information from some (typically
constant) number of randomly sampled nodes. Both the message sizes and the
nodes' local memory must be of size $\LDAUOmicron{\log k}$.

The so-called \voter process (also known as \voterAlternative), uses the most
naïve update rule: In every round, each node samples one neighbor independently
and uniformly at random and adopts that node's color. Two further natural and
prominent consensus processes are the \emph{\twochoice} and the
\emph{\hmajority{3}} process. Their corresponding update rules, as executed
synchronously by each node, are as follows:
\begin{itemize}
\item \twochoice:
    Sample two nodes independently and uniformly at random. If the samples have
    the same color, adopt it. Otherwise, ignore them and keep your current
    color.

\item \hmajority{3}:
    Sample three nodes independently and uniformly at random. If a color is
    supported by at least two samples, adopt it. Otherwise, adopt the color of
    one of them at random\footnote{%
        Equivalently, the node may adopt the color of a fixed sample (the first, or second, or third).
    }.

\end{itemize}
One reason for the interest in these processes is that they represent simple and
efficient self-stabilizing solutions for \emph{Byzantine agreement}~\cite{PSL80,
R83}: achieving consensus in the presence of an adversary that can disrupt a
bounded set of nodes each round~\cite{BCNPST14, BCNPT16, CER14, EFKMT16}.
Further interest stems from the fact that they capture aspects of how agreement
is reached in social networks and biological systems~\cite{Dolev, CER14,
HouseHunt}.

At first glance, the above processes look quite different. But a slight
reformulation of \hmajority{3}'s update rule reveals an interesting connection:
\begin{itemize}
\item \hmajority{3} (alt.):
    Sample two nodes independently and uniformly at random. If the samples
    have the same color, adopt it. Otherwise, sample a new neighbor and adopt
    its color.
\end{itemize}
This highlights the fact that \hmajority{3} is a combination of \twochoice and
\voter: Each node $u$ performs the update rule of \twochoice. If the sampled
colors do not match, instead of keeping its color, $u$ executes the update rule
of \voter. Interestingly enough, both \hmajority{3} and \twochoice behave
identical in expectation\footnote{%
    Simple calculations~\cite{BCNPST14, EFKMT16} show that, for both processes, if $x_i$ is the
    current fraction of nodes with color $i$ then    the expected fraction of nodes
    with color $i$ after one round is $x_i^2 + (1 - \sum x_j^2) \cdot x_i$.
}. In comparison to \voter, both \twochoice and \hmajority{3} exhibit a drift:
they favor colors with a large support, for which it is more likely that the
first two samples match. In particular, if there is a certain initial
bias\footnote{The \emph{bias} is the difference between the number of nodes
supporting the most and second most common color.} towards one color, \voter
still needs linear time (in $n$) to reach consensus, while both \twochoice and
\hmajority{3} can exploit the bias to achieve sublinear time. On the other hand,
it is unknown how \twochoice and \hmajority{3} behave when they start from
configurations having a large number of colors and no (or small) bias (see
\cref{sec:related_work} for details).

In this paper, we compare the time \twochoice and \hmajority{3} require to reach
consensus. In particular, we prove that there is a polynomial gap between their
performance if the bias is small 
and the number of
colors is large (say $\LDAUOmega{n^{1/3}}$). This result follows from our unconditional sublinear bound for \hmajority{3} and an almost linear lower bound
for \twochoice. Details of our results and contribution are given in
\cref{sec:contribution}.


\subsection{Related Work}
\label{sec:related_work}

Consensus processes are a quite general model that can be used to study and
understand different flavors of \enquote{spreading phenomena}, for example the
spread of infectious diseases, rumors and opinions in societies, or
viruses/worms in computer networks. Apart from the above mentioned processes,
such spreading processes include the \textsc{Moran} process~\cite{LHN05, L14},
contact processes, and classic epidemic processes~\cite{BG90, L99, nature16}.
This overview concentrates on results concerned with the time needed by \voter,
\twochoice, and \hmajority{3} to reach consensus. We also provide a short
comparison of these processes and briefly discuss some of the slightly more
distant relatives at the end of this section.

\paragraph{\voter}
Previous work provides strong results for the time to consensus of the \voter
process, even for arbitrary graphs. These results exploit an interesting
duality: the time reversal of the \voter process turns out to be the coalescing
random walk process (see~\cite{HP01, AF14}). The expected runtime of \voter on
the complete graph and with nodes of pairwise distinct colors is
$\LDAUTheta{n}$. This follows easily from results for more general graphs and
the above mentioned duality: The authors of~\cite{CEOR13} provide the upper
bound $\LDAUOmicron{\mu^{-1} \cdot (\log^4n+ \rho)}$ on the expected coalescing
time. Here $\mu$ is the graph's spectral gap and $\rho = (d_{\operatorname{avg}}
\cdot n)^2 / \sum_{u\in V} d^2(u)$, where $d_{\operatorname{avg}}$ denotes the
average degree. In~\cite{BGKM16} the authors show that the expected time to
consensus is bounded by $\LDAUOmicron{m / (d_{\min}\cdot\phi)}$. Here, $m$ is
the number of edges, $\phi$ the conductance, and $d_{\min}$ the minimal degree.

\paragraph{\twochoice}
To the best of our knowledge, the only work that considers the case $k > 2$ for
\twochoice is~\cite{EFKMT16}. The authors study the complete graph and show that
\twochoice reaches consensus with high probability in $\LDAUOmicron{k \cdot \log
n}$ rounds, provided that $k = \LDAUOmicron{n^\varepsilon}$ for a small constant
$\varepsilon > 0$ and an initial bias of $\LDAUOmega{\sqrt{n \cdot \log n}}$.
In~\cite{CER14, CERRS15}, the authors consider \twochoice for $k = 2$ on
different graphs. For random $d$-regular graphs,~\cite{CER14} proves that all
nodes agree on the first color in $\LDAUOmicron{\log n}$ rounds, provided that
the bias is $\LDAUOmega[small]{ n \cdot \sqrt{1/d + d/n} }$. The same holds for
arbitrary $d$-regular graphs if the bias is $\LDAUOmega{\lambda_2 \cdot n}$,
where $\lambda_2$ is the second largest eigenvalue of the transition matrix. In
\cite{CERRS15}, these results are extended to general expander graphs.

\paragraph{\hmajority{3}}
All theoretical results for \hmajority{3} consider the complete graph. The
authors of~\cite{BCNPST14} assume that the bias is $\LDAUOmega[small]{ \min\set{
\sqrt{2k}, {(n / \log n)}^{1/6} } \cdot \sqrt{ n \log n } }$. Under this
assumption, they prove that consensus is reached with high probability in
$\LDAUOmicron[small]{\min\set{ k, {(n / \log n)}^{1/3} } \cdot\log n}$ rounds,
and that this is tight if $k \leq {(n / \log n)}^{1/4}$. The only result without
bias~\cite{BCNPT16} restricts the number of initial colors to $k =
\LDAUomicron[small]{n^{1/3}}$. Under this assumption, they prove that
\hmajority{3} reaches consensus with high probability in
$\LDAUOmicron[small]{(k^2 {(\log n)}^{1/2}+ k \log n) \cdot (k + \log n)}$
rounds. Their analysis considers phases of length $\LDAUOmicron[small]{k^2\log n}$ and shows that,
at the end of each phase,  one of the initial colors disappears with high
probability. Note that this approach – so far the only one not assuming any bias
– cannot yield sublinear bounds with respect to $k$.

\bigbreak
Comparing the above processes on the complete graph for $k > 2$, we see that
there are situations where \voter is much slower than \twochoice or
\hmajority{3}. Even with a linear bias, \voter is known to have linear runtime.
In contrast, whenever there is a sufficient bias towards one color, both \twochoice and
\hmajority{3} can exploit this to achieve sublinear runtime\footnote{%
    Interestingly enough, \twochoice converges to the relative majority color whenever
    the initial bias is
    $\LDAUOmega[small]{\sqrt{n \log n}}$~\cite{BGKM16}, while the lowest bias
    that \hmajority{3} can cope with in order to converge to majority is 
    $\LDAUOmega[small]{\sqrt{k} \cdot \sqrt{n \log n}}$~\cite{BCNPST14}.
}. However, to the best of our knowledge there are no unconditional results on
\twochoice and \hmajority{3}. All but one results need a minimum bias (at least
$\LDAUOmega{\sqrt{n \log n}}$), and the only approach that works without any
bias restricts the number of colors to $k = \LDAUomicron{ n^{1/3}
}$~\cite{BCNPT16}.

A related consensus process is
\emph{$2$-\textsc{Median}}~\cite{DBLP:conf/spaa/DoerrGMSS11}. Here, every node
updates its color (a numerical value) to the median of its own value and two
randomly sampled nodes. Without assuming any initial bias, the authors show that
this process reaches consensus with high probability in $\LDAUOmicron{\log k
\cdot \log\log n + \log n}$ rounds. This is seemingly stronger than the bounds
achieved for \hmajority{3} and \twochoice without bias. However, it comes at the
price of a complete order on the colors (our processes require colors only to be
testable for identity). Moreover, $2$-\textsc{Median} is not self-stabilizing
for Byzantine agreement (unlike \hmajority{3} and \twochoice~\cite{BCNPT16,
EFKMT16}): it cannot guarantee \emph{validity}\footnote{%
    Byzantine agreement requires that the system does not converge to a color
    that was initially not supported by at least one non-corrupted node.
}~\cite{BCNPT16}. Another consensus process is the
\emph{\textsc{UndecidedDynamics}}. Here, each node randomly samples one neighbor
and, if the sample has a different color, adopts a special \enquote{undecided}
color. In subsequent rounds, it tries to find a new (real) color by sampling one
random neighbor. The most recent results~\cite{DBLP:conf/soda/BecchettiCNPS15}
show that, for a large enough bias, consensus is reached with high probability
in at most $\LDAUOmicron{k \log n}$ rounds. Slightly more involved variants yield
improved bounds of $\LDAUOmicron{\log k \cdot \log
n}$~\cite{Petra-Berenbrink:2016aa, GP16, EFKMT16}. 
However, observe that for $k = {n}$ all nodes become undecided with constant
probability instead of agreeing on a color.


\subsection{Contribution \& Approach}
\label{sec:contribution}

In this work, we provide  an upper bound on \hmajority{3}
and a lower bound on \twochoice that solve two open issues: We give the first
unconditional sublinear bound on any of these processes (an open issue from,
e.g., \cite{BCNPT16}) and prove that there can be a polynomial gap between the
performance of \hmajority{3} and \twochoice (see \cref{thm:main:simpl}
below). One should note that this gap is in stark contrast not only to the
expected behavior of both processes (which is identical)
but also to the setting when there is a bias towards one color (where both
processes exhibit the same asymptotic runtime $\LDAUOmicron{k \cdot \log n}$;
see \cref{sec:related_work}).

The following theorem states slightly simplified   versions of
our upper and lower bounds (see \cref{thm:3maj} and
\cref{lem:lowerTCstrong} for the detailed statements).
\begin{theorem}[Simplified]
\label{thm:main:simpl}
Starting from an arbitrary configuration, \hmajority{3} reaches consensus with
high probability in $\LDAUOmicron[small]{n^{3/4}\log^{7/8} n }$ rounds. When
started from a configuration where each color is supported by at most
$\LDAUOmicron{\log n}$ nodes, \twochoice needs with high probability
$\LDAUOmega{n / \log n}$ rounds to reach consensus.
\end{theorem}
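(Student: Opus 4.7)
The statement has two independent parts; I sketch each.

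\emph{(i) Upper bound for \hmajority{3}.} The plan is to exploit the reformulation from the introduction: each \hmajority{3}-node first performs a \twochoice step and, only if the two samples disagree, falls back to a \voter step on a fresh sample. While no color has large support, the first-branch probability $\sum_i (c_i/n)^2 \le c_{\max}/n$ is small, so the process closely couples to pure \voter on the same configuration. Since \voter on $K_n$ is variance-driven (per-round displacement variance $\Theta(c_i(1-c_i/n))$ per color), after $T$ rounds the largest color support grows to roughly $\sqrt{T\cdot c_{\max}^{(0)}}$, up to polylog factors and a small drift correction from the \twochoice branch. I would tune $T$ so that some color crosses the bias threshold $\Omega(\sqrt{n\log n})$ required by~\cite{BCNPST14} and, simultaneously, the number of surviving colors has dropped into the tractable range for that result; optimizing this trade-off produces the exponents $3/4$ and $7/8$, after which the bias-regime guarantees absorb the remaining colors in additional polylogarithmic time.

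\emph{(ii) Lower bound for \twochoice.} In \twochoice a node switches only when its two independent samples agree, so the per-node switch probability is at most $\sum_i (c_i/n)^2 \le c_{\max}/n$ and the expected total number of switches per round is at most $c_{\max}$. Starting from $c_{\max}^{(0)} = L_0 = O(\log n)$, this is $O(\log n)$ per round in expectation, concentrated by Chernoff. The plan proceeds in two steps: first, establish the invariant $c_{\max}^{(t)} \le 2L_0$ for every $t \le T = cn/\log n$ (with $c$ a small constant) via a stopping-time argument; the key computation is that for any color $i$ the number of new recruits satisfies $E[R_i^{(t)}] \le (c_i^{(t)})^2/n$, so under the invariant $E[\sum_{t<T} R_i^{(t)}] \le 4TL_0^2/n = O(\log n)$, and martingale (Azuma-type) concentration combined with a union bound over the $\le n$ colors yields $\max_i \sum_{t<T} R_i^{(t)} = O(\log n)$ whp, making the invariant self-sustaining. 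Second, under the invariant, the total number of switches across $T$ rounds is $\sum_{t<T} X_t = O(T\cdot L_0) = O(cn)$ whp, whereas reaching consensus requires $\ge n - L_0 = \Omega(n)$ distinct nodes to have switched at some point — a contradiction for $c$ small enough, hence $T = \Omega(n/\log n)$.

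\emph{Main obstacles.} For (i), the delicate step is the hand-off between the \voter-coupling phase and the bias-driven finish: the coupling degrades as $c_{\max}$ grows, while the bias-regime guarantees constrain how many surviving colors may still be present, and matching these constraints requires fine control of the Voter-like fluctuations; this balance is precisely what locks the exponents at $3/4$ and $7/8$. For (ii), the subtle points are the circular nature of the invariant (one uses the invariant to bound growth and the growth bound to prove the invariant, resolved cleanly by a stopping time on the first violation) and the concentration of $\sum_t R_i^{(t)}$ with non-independent per-round contributions — standard Chernoff bounds become weak in the very-small-mean regime, so more careful moment-generating-function estimates or a direct martingale argument are required.
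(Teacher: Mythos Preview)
Your part~(ii) is essentially the paper's argument: a stopping time at the first violation of $c_{\max}\le\ell':=\max\{2\ell,\Theta(\log n)\}$, a coupling of the arrivals to each fixed color with i.i.d.\ Bernoulli($p$) variables where $p=(\ell'/n)^2$, summing over $T=\Theta(n/\ell')$ rounds into a single Binomial, then Chernoff and a union bound over colors. Your ``second step'' via total switches is unnecessary --- the invariant $c_{\max}\le\ell'<n$ already rules out consensus --- but harmless.

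Part~(i), however, departs from the paper and has a genuine gap. The paper does \emph{not} try to build a bias and then invoke~\cite{BCNPST14}. Instead it (a)~proves, via majorization theory (Schur-convexity of the multinomial family in its probability vector, combined with a Strassen-type coupling existence theorem), that there is a coupling under which the \hmajority{3} configuration majorizes the \voter configuration at every step --- and this holds for \emph{all} configurations, not only when $c_{\max}/n$ is small; (b)~bounds the time for \voter to drop to $k$ colors by $O((n/k)\log n)$ via the coalescing-random-walk duality plus a variable-drift theorem; and (c)~once $k=n^{1/4}\log^{1/8}n$, applies the \emph{unbiased} result of~\cite{BCNPT16}, which needs only $k=o(n^{1/3})$ and gives time $O(k^3\sqrt{\log n})$. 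Balancing $(n/k)\log n$ against $k^3\sqrt{\log n}$ is precisely what produces $n^{3/4}\log^{7/8}n$.

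Your route via~\cite{BCNPST14} does not close. \voter is a martingale in each coordinate, so variance fluctuations may enlarge the top color but create no \emph{bias} (gap between the top two), whereas~\cite{BCNPST14} requires bias $\Omega(\min\{\sqrt{k},(n/\log n)^{1/6}\}\sqrt{n\log n})$, which itself grows with $k$. Your scaling $c_{\max}^{(T)}\approx\sqrt{T\,c_{\max}^{(0)}}$ gives at best $\sqrt{T}$ from $c_{\max}^{(0)}=O(1)$, so even reaching $\sqrt{n\log n}$ would already need $T\ge n\log n$; and the finish time under~\cite{BCNPST14} is $O(\min\{k,(n/\log n)^{1/3}\}\log n)$, not polylogarithmic unless $k$ already is. The assertion that ``optimizing this trade-off produces $3/4$ and $7/8$'' is therefore unsupported --- those exponents come from an entirely different balance in the paper's argument.
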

The lower bound for \twochoice follows mostly by standard techniques, using a
coupling with a slightly simplified process and Chernoff bounds. The proof of
the upper bound for \hmajority{3} is more involved and based on a combination of
various techniques and results from different contexts. This approach not only
results in a concise proof of the upper bound, but yields some additional,
interesting results along the way. We give a brief overview of our approach in
the next paragraph.

\paragraph{Approach}
To derive our upper bound on the time to consensus required by \hmajority{3}, we
split the analysis in two phases:
\begin{enumerate*}[(a)]
\item the time needed to go from $n$ to $\approx n^{1/4}$ colors and
\item the time needed to go from $\approx n^{1/4}$ to one color.
\end{enumerate*}
The runtime of the second phase follows by a simple application
of~\cite{BCNPT16} and is $\SOFTOmicron[small]{n^{3/4}}$. Bounding the runtime of
the first phase is more challenging: we cannot rely on the drift from a bias or
similar effects, and it is not clear how to perform a direct analysis in this
setting (\hmajority{3} is geared towards biased configurations). To overcome
this issue, we resort to a coupling between \voter and \hmajority{3}. Since the
construction of such a coupling seems elusive, we use some machinery from
majorization theory~\cite{Marshall:2011aa} to merely prove the \emph{existence} of the coupling 
(see next paragraph). As a consequence of (the existence of) this coupling, we
get that the time needed by \hmajority{3} to reduce the number of colors to a
fixed value is stochastically dominated by the time \voter needs for this
(\cref{cor:3-majvspolling}). This, finally, allows us to upper bound the time
needed by \hmajority{3}\footnote{%
    Note that for a large number of colors, a node executing \hmajority{3}
    behaves with high probability like a node performing \voter. Thus, it is
    relatively tight to bound \hmajority{3} by \voter in this parameter regime.
} to go from $\approx n$ to $\approx n^{1/4}$ colors by the time \voter needs
for this (which, in turn, we bound in \cref{lem:reducetok} by $\SOFTOmicron{n
/ k}$).

The technically most interesting part of our analysis is the proof of the
stochastic dominance between \hmajority{3} and \voter. It works for a wide class
of processes (including \voter and \hmajority{3}), which we call \emph{\aclong/}
(\AC/-) processes (see \cref{def:anoagg_processes}). 
These are defined by an update rule that causes each node to
adopt any color $i$ with the same probability $\alpha_i$ that depends only on
the current frequency of colors. 

In the following, we provide a natural 
way to compare two processes. First, we define a way to compare two
configurations $c$ and $c'$. 
We use
\emph{vector majorization} for this purpose: $c$ majorizes $c'$ ($c \succeq c'$)
if the total support of the $j$ largest colors in $c$ is not smaller than that
in $c'$ for all $j \in \intcc{k}$. In particular, note that a configuration
where all nodes have the same color majorizes any other configuration. 
Let us write $P(c)$ for the (random) configuration obtained by performing one
step of a process $P$ on configuration $c$. Consider two processes $P, P'$
and two configurations $c, c'$ with $c \succeq c'$.
We say $P$ \emph{dominates} $P'$ if, for all $j \in \intcc{k}$, the
following holds:
\begin{quoting}[leftmargin=2.9em]
    The sum of the $j$ largest components of the vector $\E{P(c)}$ is not
    smaller than that of $\E{P'(c)}$.
\end{quoting}
Note that this definition is not restricted to \AC/-processes.

Our main technical result (\cref{cor:anonymous_coupling}) proves that, for
two \AC/-processes, $P$ dominating $P'$ implies that the time needed by $P'$ to
reduce the number of colors to a fixed value stochastically dominates the time
$P$ needs for this. Note that while this statement might sound obvious, it is
not true in general (if one of the processes is not an \AC/-process): \twochoice
dominates \voter, but it is much slower in reducing the number of colors when
there are many colors.


\section{Consensus Model \& Technical Framework}
\label{sec:model_and_framework}

This section introduces our technical framework using concepts from
majorization theory, which is used in \cref{sec:upper_bound_3majority} to
derive the sublinear upper bound on \hmajority{3}. After defining the model and
general notation, we provide a few definitions and state the main result of
this section (\cref{cor:anonymous_coupling}).

\subsection{Model and Notation}
\label{sec:model_notation}

We consider the \emph{consensus problem} on the complete graph of $n$ nodes.
Initially, each node supports one opinion (or color) from the set $\intcc{k}
\coloneqq \set{1, 2, \dots, k}$, where $k \leq n$. Nodes interact in
synchronous, discrete rounds using the \emph{Uniform Pull}
mechanism~\cite{DGHILSSST87}. That is, during every round each node can ask the
opinion of a constant number of random neighbors. Given these opinions, it
updates its own opinion according to some fixed update rule. The goal of the
system is to reach \emph{consensus} (a configuration where all nodes support the
same opinion).

Let $\N \coloneqq \set{1, 2, \dots}$ and $\N_0 \coloneqq \set{0, 1, \dots}$. We
describe the system state after any round by an $n$-dimensional integral vector
$c = \intoo{c_i}_{i \in \intcc{n}} \in \N_0^n$ with $\sum_{ i \in \intcc{n} }
c_i = n$. Here, the $i$-th component $c_i \in \N_0$ corresponds to the number of
nodes supporting opinion $i$. If $k < n$, then $c_i = 0$ for all $i \in
\set{k+1, k+2, \dots, n}$. We use $\mathcal{C} \coloneqq \set{ c\in\N_0^n |
\sum_{ i \in \intcc{k} } c_i = n }$ to denote the set of all possible
configurations.

Let $d \in \N$ and $x, y \in \R^d$. We define $\norm{x}_1 \coloneqq \sum_{i \in
\intcc{d} } x_i$ and $\norm{x}_2 \coloneqq \smash{{\bigl( \sum_{ i \in \intcc{d}
} x_i^2 \bigr)}^{1/2}}$. Moreover, let $\smash{\sortDown{x}}$  denote a
permutation of $x$ such that all components are sorted non-increasingly.
 We write $x \succeq y$ and say $x$ \emph{majorizes}
$y$ if, for all $l \in \intcc{d}$, we have $\sum_{ i \in \intcc{l} }
\sortDown{x}_i \geq \sum_{ i \in \intcc{l} } \sortDown{y}_i$ and $\norm{x}_1 =
\norm{y}_1$. 
For two random variables $X$ and $Y$ we write $X \leq^{\text{st}} Y$ if $X$ is  \emph{stochastically dominated} by $Y$, i.e., $\Pr{X \geq x} \leq \Pr{Y \geq x}$ for all $x\in \N_0$.
A function $f\colon \R^d \to \R$ is \emph{Schur-convex} if $x
\succeq y \Rightarrow f(x) \geq f(y)$. For a probability vector $\Theta \in
\intcc{0,1}^d$, we use $\Mult(m,\Theta)$ to denote the multinomial distribution
for $m$ trials and $d$ categories (the $i$-th category having probability
$\Theta_i$).

\subsection{Comparing \AClong/ Processes}
\label{sec:anoagg}

We first define a class of processes defined by update rules that depend only on
the current configuration.
The update rule states that each nodes adopts a color
$i$ with the same probability $\alpha_i(c)$, where $c \in \mathcal{C}$ is the
current configuration. In particular, node IDs (including the sampling node's
ID) do not influence the outcome. In this sense, such update rules are
\emph{anonymous}.
\begin{definition}[\AClong/ Processes]
\label{def:anoagg_processes}
Given a distributed system of $n$ nodes, an \emph{\aclong/ process} $P_{\alpha}$
is characterized by a \emph{process function} $\alpha\colon \mathcal{C} \to
\intcc{0,1}^n$ with $\sum_{i\in\intcc{n}}\alpha_i(c) = 1$ for all $c \in
\mathcal{C}$. When in configuration $c \in \mathcal{C}$, each node independently
adopts opinion $i\in\intcc{k}$ with probability $\alpha_i(c)$. We use the
shorthand \emph{\AC/-processes} to refer to this class.
\end{definition}
Given an \AC/-process $P_{\alpha}$ and a fixed initial configuration,
let\footnote{%
    Recall that, with a slight abuse of notation we also write $P(c)$ for the
    (random) configuration obtained by performing one step of a process $P$ on
    configuration $c$.
} $P_{\alpha}(t)$ denote the configuration of $P_{\alpha}$ at time $t$. By
\cref{def:anoagg_processes}, $\bigl( P_{\alpha}(t) \bigr)_{t \geq 0}$ is a
Markov chain, since $P_{\alpha}(t)$ depends only on $P_{\alpha}(t-1)$. Another
immediate consequence of \cref{def:anoagg_processes} is that $ P_{\alpha}(t)$
conditional on $P_{\alpha}(t-1)=c$ is distributed according to $\Mult\bigl(n,
\alpha(c)\bigr)$. In other words, the 1-step distribution of an \AC/-process is
a multinomial distribution. Two important examples of \AC/-processes include
\voter and \hmajority{3}:
\begin{itemize}
\item
    In the \voter process $P_{\aggfvoter}$, each node samples one node
    (according to the pull mechanism) and (always) adopts that node's opinion.
    Thus
    \begin{equation}
    \label{eq:poll_up}
       \aggfvoter_i(c)
    =  \frac{c_i}{n}
    .
    \end{equation}

\item
    In the \hmajority{3} process $P_{\aggfhmajority{3}}$, each node samples
    independently and uniformly at random three nodes. If a color is supported
    by at least two of the samples, adopt it. Otherwise, adopt a random one of
    the sampled colors. Simple calculations (see~\cite{BCNPST14}) show
    \begin{equation}
      \aggfhmajority{3}_i(c)
    = \frac{c_i}{n} \cdot \left(1 + \frac{c_i}{n}-\norm{\frac cn}_2^2\right)
    .
    \label{eq:maj_up}
    \end{equation}
\end{itemize}

For any protocol $P$ starting with configuration $c\in \mathcal{C}$ let $T_P^\kappa(c)$ denote the first time step where the number of remaining
colors reduces to $\kappa$ where $\kappa \in \N$.
The next definition introduces dominance between protocol. Intuitively,
a protocol $P$ dominates another protocol $P'$ if their expected behavior
preserves majorization.
\begin{definition}[Protocol Dominance]
\label{def:prot_dom}
Consider two (not necessarily \AC/-) processes $P, \tilde{P}$. We say $P$
dominates $\tilde{P}$ if $\Ex{P(c)} \succeq \ex{\tilde{P}(\tilde{c})}$ holds for
all $c, \tilde{c} \in \mathcal{C}$ with $c \succeq \tilde{c}$.
\end{definition}
Note that, in the case of \AC/-protocols, \cref{def:prot_dom} can be stated
as follows: $P_{\alpha}$ dominates $P_{\tilde{\alpha}}$ if and only if $c
\succeq \tilde{c} \Rightarrow \alpha(c) \succeq \tilde{\alpha}(\tilde{c})$ for
all $c, \tilde{c} \in \mathcal{C}$ with $c \succeq \tilde{c}$. With this, the
main result of our framework can be stated as follows.
\begin{theorem}
\label{cor:anonymous_coupling}
Consider two \AC/-Processes $P$ and $P'$ where $P$ dominates $P'$. Assume
$P$ and $P'$ are started from the same configuration $c \in \mathcal{C}$. Then,
for any $\kappa \in \N$, the time needed by $P'$ to reduce the number of remaining
colors to $\kappa$ dominates the time $P$ needs for this, i.e.,
\[ T_{P'}^\kappa(c) \geq^{\text{st}} T_P^\kappa(c) .\]
\end{theorem}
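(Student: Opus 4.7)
The plan is to construct a coupling of the two Markov chains, both started at $c$, under which $P(t) \succeq P'(t)$ holds almost surely for every $t$, and then to exploit a simple structural fact about majorization on $\mathcal{C}$: if $x \succeq y$, then the number of nonzero entries $N(x)$ of $x$ is at most $N(y)$. Indeed, with $b = N(y)$, majorization together with $\norm{x}_1 = \norm{y}_1 = n$ gives $\sum_{i \leq b} \sortDown{x}_i \geq \sum_{i \leq b} \sortDown{y}_i = n$, forcing $\sortDown{x}_i = 0$ for $i > b$. Under the coupling we thus have $N(P(t)) \leq N(P'(t))$ for every $t$, hence $T_P^\kappa(c) \leq T_{P'}^\kappa(c)$ almost surely, which is stronger than the stochastic domination claimed in the theorem.

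Because $\bigl( P(t) \bigr)_{t \geq 0}$ and $\bigl( P'(t) \bigr)_{t \geq 0}$ are Markov chains, it is enough to build a one-step coupling and to paste copies of it together. Specifically, conditional on $P(t) = a$ and $P'(t) = a'$ with $a \succeq a'$, the laws of $P(t+1)$ and $P'(t+1)$ are $\Mult(n, \alpha(a))$ and $\Mult(n, \alpha'(a'))$, respectively. The \AC/ form of \cref{def:prot_dom} (recorded just after its statement) gives $\alpha(a) \succeq \alpha'(a')$. Hence the construction reduces to the following key lemma: for any two probability vectors $\Theta \succeq \Theta'$ in $[0,1]^n$, there is a coupling of $X \sim \Mult(n, \Theta)$ and $X' \sim \Mult(n, \Theta')$ with $X \succeq X'$ almost surely.

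To prove this lemma I would combine two ingredients from majorization theory \cite{Marshall:2011aa}. First, the classical Schur-convexity of the multinomial expectation: for every Schur-convex $f\colon \R^n \to \R$, the map $\Theta \mapsto \E{f(\Mult(n, \Theta))}$ is itself Schur-convex, so $\Theta \succeq \Theta'$ implies $\E{f(X)} \geq \E{f(X')}$ for all Schur-convex $f$. Second, Strassen's theorem on the finite partially ordered set obtained from $\mathcal{C}$ by quotienting out coordinate permutations and equipping it with the majorization order: stochastic dominance against all monotone (here, Schur-convex) test functions upgrades to a coupling with pathwise majorization.

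The main obstacle is the lemma. A direct ``ball-level'' coupling is tempting — reduce to a single T-transform and, for each ball in the larger-parameter bin, independently reroute it to the smaller-parameter bin with the appropriate probability — but this coupling does not preserve majorization pathwise: the bin with larger parameter can end up with fewer balls than the bin with smaller parameter, as a small two-bin example with $n = 2$, $\Theta = (0.6, 0.4)$, $\Theta' = (0.5, 0.5)$ already exhibits. Routing the argument through Schur-convexity of the multinomial expectation and then through Strassen is what makes the step robust; verifying the standard hypotheses of Strassen on this finite partial order is routine.
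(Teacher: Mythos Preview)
Your proposal is correct and follows essentially the same route as the paper: reduce to a one-step coupling, use the \AC/-process form of dominance to get $\alpha(a) \succeq \alpha'(a')$, invoke Schur-convexity of $\Theta \mapsto \E{f(\Mult(n,\Theta))}$ to obtain stochastic majorization, and then apply Strassen's theorem to upgrade this to a pathwise coupling. The paper packages the one-step statement as a separate lemma and works with the preorder $\preceq$ directly on the closed set $\mathcal{C}$ (rather than passing to the quotient by permutations), but the substance is identical; your explicit remark that $x \succeq y$ forces $N(x) \leq N(y)$ and your discussion of why a naive ball-rerouting coupling fails are nice additions that the paper leaves implicit.
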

One should note that the statement of \cref{cor:anonymous_coupling} is not
true in general (i.e., for non-\AC/-processes). In particular, \twochoice
dominates \voter, but our upper bound on \voter (\cref{sec:boundonvoter}) and
our lower bound on \twochoice (\cref{lem:lowerTCstrong}) contradict the
statement of \cref{cor:anonymous_coupling}.

\subsection{Coupling two \AC/-Processes}
\label{sec:couplingtwoACPs}

In order to prove \cref{cor:anonymous_coupling}, we formulate a strong 1-step
coupling property for \AC/-processes:
\begin{lemma}[1-Step Coupling]
\label{thm:anonymous_1step_coupling}
Let $P_{\alpha}$ and $P_{\tilde{\alpha}}$ be two \AC/-processes. Consider any
two configurations $c,\tilde{c} \in \mathcal{C}$ with $\alpha(c) \succeq
\tilde{\alpha}(\tilde{c})$.
 Let $c'$ and $\tilde{c}'$ be the configurations of $P_{\alpha}$
and $P_{\tilde{\alpha}}$ after one round, respectively.
Then, there exists a coupling such that $c' \succeq
\tilde{c}'$.
\end{lemma}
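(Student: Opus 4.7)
The plan is to reduce \cref{thm:anonymous_1step_coupling} to a classical Schur-convexity property of the multinomial family and then invoke Strassen's coupling theorem. The key structural observation, made right after \cref{def:anoagg_processes}, is that conditional on the current configuration $c$, the next configuration of an \AC/-process $P_\beta$ is distributed as $\Mult(n, \beta(c))$. Applied to $P_\alpha$ and $P_{\tilde\alpha}$ this yields $c' \sim \Mult(n, \alpha(c))$ and $\tilde c' \sim \Mult(n, \tilde\alpha(\tilde c))$. Setting $p \coloneqq \alpha(c)$ and $q \coloneqq \tilde\alpha(\tilde c)$ (so that $p \succeq q$ by assumption), it therefore suffices to prove the following purely probabilistic statement: for probability vectors $p, q$ with $p \succeq q$, there exists a coupling of $X \sim \Mult(n, p)$ and $Y \sim \Mult(n, q)$ with $X \succeq Y$ almost surely.

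First I would invoke the classical fact, documented in Marshall, Olkin, and Arnold~\cite{Marshall:2011aa}, that for every Schur-convex function $f \colon \N_0^n \to \R$ the map $p \mapsto \E{f(\Mult(n, p))}$ is itself Schur-convex in $p$. Applied to $p \succeq q$, this gives $\E{f(X)} \geq \E{f(Y)}$ for all Schur-convex $f$. Since a function on $\N_0^n$ is monotone non-decreasing with respect to the majorization partial order if and only if it is Schur-convex, an equivalent reformulation is $\Pr[X \in U] \geq \Pr[Y \in U]$ for every majorization upper set $U$; in other words, $X$ stochastically dominates $Y$ with respect to $\succeq$.

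Next I would apply Strassen's theorem on stochastic domination in a partially ordered finite space. The state space $\mathcal{C}$ together with $\succeq$ (after identifying permutations of color labels) is a finite poset, and Strassen's result then guarantees that the stochastic domination established in the previous paragraph is equivalent to the existence of a coupling $(X, Y)$ with $X \succeq Y$ almost surely. Relabeling $X$ as $c'$ and $Y$ as $\tilde c'$ yields the claim of \cref{thm:anonymous_1step_coupling}.

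The main obstacle is the Schur-convex inequality for multinomials invoked in the first step. Its standard proof goes via Birkhoff--von Neumann: one writes $q = D p$ for a doubly stochastic matrix $D$, decomposes $D$ as a convex combination of permutation matrices, and reduces to the case of a single T-transform (a two-coordinate averaging), for which symmetry of Schur-convex functions makes the inequality verifiable by a short direct calculation. The existential route via Strassen seems essential here: naive trial-by-trial couplings do not in general produce pathwise majorization, because shifting probability mass between two coordinates at the single-trial level can, for some realizations, instead move the two corresponding counts closer together and thereby break majorization.
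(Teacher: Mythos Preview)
Your proposal is correct and matches the paper's own proof essentially step for step: reduce to $c'\sim\Mult(n,\alpha(c))$ and $\tilde c'\sim\Mult(n,\tilde\alpha(\tilde c))$, invoke the Schur-convexity of $\Theta\mapsto\Ex{\phi(\Mult(n,\Theta))}$ from~\cite{Marshall:2011aa} to obtain stochastic majorization, and then apply Strassen's theorem on the finite configuration space to extract the coupling. The only cosmetic difference is that the paper works directly with the preorder version of Strassen (its \cref{thm:strassen:simplified}) rather than passing to a quotient poset, but this is immaterial.
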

Note that \cref{cor:anonymous_coupling} is an immediate consequence of
\cref{thm:anonymous_1step_coupling}: Since $P$ dominates $P'$ (which is, for
\AC/-processes, equivalent to $\alpha(c) \succeq \tilde{\alpha}(\tilde{c})$ for
all $c, \tilde{c}$ with $c \succeq \tilde{c}$) we can apply
\cref{thm:anonymous_1step_coupling} iteratively to get
\cref{cor:anonymous_coupling}. The fine-grained comparison enabled by
\cref{thm:anonymous_1step_coupling} is based on three observations:
\begin{enumerate}
\item
    The (pre-) order \enquote{$\preceq$} on the set of configurations naturally
    measures the closeness to consensus. Indeed, a configuration with only one
    remaining color is maximal with respect to \enquote{$\preceq$}. Similarly,
    the $n$-color configuration is minimal.

\item
    We can define a vector variant \enquote{$\preceq^{\text{st}}$} of stochastic
    domination (see \cref{def:stochastic_majorization}) such that $\Theta_1
    \preceq \Theta_2 \Rightarrow \Mult\bigl( m, \Theta_1 \bigr)
    \preceq^{\text{st}} \Mult\bigl( m, \Theta_2 \bigr)$
    (\cite[Proposition~11.E.11]{Marshall:2011aa} or
    \cref{lem:multinomial:schur_convex} in \cref{sec:techtools}).

\item
    Consider two configurations $c,\tilde{c} \in \mathcal{C}$ with $\alpha(c)
    \succeq \tilde{\alpha}(\tilde{c})$. Since $P_{\alpha}(c) \sim \Mult(n,
    \alpha(c))$ and $P_{\tilde{\alpha}} \sim \Mult(n,
    \tilde{\alpha}(\tilde{c}))$, the previous observations imply that one step
    of $P_{\alpha}$ on $c$ is stochastically \enquote{better} than one step of
    $P_{\tilde{\alpha}}$ on $\tilde{c}$. Our goal is to apply
    \cref{thm:anonymous_1step_coupling} iteratively to get
    \cref{cor:anonymous_coupling}. For this, we prove a coupling showing
    majorization between the resulting configurations. We achieve this via a
    variant of Strassen's Theorem (see \cref{thm:strassen:simplified} below),
    which translates \emph{stochastic} domination among random vectors to the
    \emph{existence} of such a coupling.

\end{enumerate}

We now give a definition of stochastic majorization that is compatible with the
preorder \enquote{$\preceq$} on the configuration space $\mathcal{C}$
(cf.~\cite[Chapter~11]{Marshall:2011aa}).
\begin{definition}[Stochastic Majorization]
\label{def:stochastic_majorization}
For two random vectors $X$ and $Y$ in $\R^d$, we write $X \preceq^{\text{st}} Y$
and say that $Y$ \emph{stochastically majorizes} $X$ if $\Ex{\phi(X)} \leq
\Ex{\phi(Y)}$ for all Schur-convex functions $\phi$ on $\R^d$ such that the
expectations are defined.
\end{definition}
We proceed by stating the aforementioned variant (\cref{thm:strassen:simplified}) of Strassen's Theorem (\cref{thm:strassen:original}) whose derivation is provided in \cref{sec:Weihnachten}.
\begin{theorem}[Strassen's Theorem (variant)]
\label{thm:strassen:simplified}
Consider a closed subset $\mathcal{A} \subseteq \R^n$ such that the set $\set{
(x,y) | x \preceq y }$ is closed.
For two random vectors $X$ and $Y$ over $\mathcal{A}$, the following conditions
are equivalent:
\begin{enumerate}[(i)]
\item\label{thm:strassen:simplified:cond1}
    (Stochastic Majorization) $X \preceq^{\text{st}} Y$ and
\item\label{thm:strassen:simplified:cond2}
    (Coupling) there is a coupling between $X$ and $Y$ such that $\Pr{X \preceq
    Y} = 1$.
\end{enumerate}
\end{theorem}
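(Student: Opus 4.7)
The plan is to deduce this statement from the classical Strassen theorem (\cref{thm:strassen:original}) by identifying Schur-convex functions with the monotone functions for the majorization preorder. The direction (ii)$\Rightarrow$(i) is immediate: any coupling with $X \preceq Y$ almost surely yields $\phi(X) \leq \phi(Y)$ almost surely for every Schur-convex $\phi$, and hence $\Ex{\phi(X)} \leq \Ex{\phi(Y)}$ whenever both expectations are defined.

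For (i)$\Rightarrow$(ii), I would first unpack the hypotheses to fit the setting of Strassen's theorem. Since $\mathcal{A}$ is a closed subset of $\R^n$, it is Polish, and since $\set{(x,y) | x \preceq y}$ is closed by assumption, majorization is a closed preorder on $\mathcal{A}$ (reflexive and transitive, though not antisymmetric — the preorder formulation of Strassen accommodates this). By the very definition of Schur-convexity, a function $\phi\colon \R^n \to \R$ satisfies $x \preceq y \Rightarrow \phi(x) \leq \phi(y)$ exactly when it is Schur-convex; in other words, the Schur-convex functions are precisely the $\preceq$-monotone functions. Consequently, restricting the test class to bounded measurable functions, the hypothesis $X \preceq^{\text{st}} Y$ from \cref{def:stochastic_majorization} yields $\Ex{f(X)} \leq \Ex{f(Y)}$ for every bounded measurable $\preceq$-monotone $f$, which is exactly the stochastic-ordering hypothesis in the classical theorem. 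Applying \cref{thm:strassen:original} then produces a coupling of $X$ and $Y$ on a common probability space for which $\Pr{X \preceq Y} = 1$.

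I expect the main obstacle to lie in the regularity check that licenses the use of Strassen's theorem: namely, verifying that closedness of $\set{(x,y) | x \preceq y}$ in $\mathcal{A} \times \mathcal{A}$ together with $\mathcal{A}$ being Polish is enough to invoke the preorder version of Strassen (as opposed to the more commonly stated partial-order version, since majorization identifies vectors that are permutations of each other). A secondary point to verify is that restricting Definition~\ref{def:stochastic_majorization} to bounded Schur-convex test functions is equivalent to the stated form — this is routine by truncation but should be noted. Once these issues are handled, the identification of Schur-convex with $\preceq$-monotone is essentially tautological, and the equivalence (i)$\Leftrightarrow$(ii) follows directly from the classical result.
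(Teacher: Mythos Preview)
Your proposal is correct and takes essentially the same approach as the paper: both deduce the statement from \cref{thm:strassen:original} via the key observation that the Schur-convex functions are precisely the $\preceq$-monotone functions, so that the cone-generated preorder $\leq_{\mathcal{C}}$ coincides with majorization $\preceq$. The paper packages the residual technical check as verifying the completeness condition $\mathcal{C}^+=\mathcal{C}^*$ in the Marshall--Olkin--Arnold formulation (immediate from maximality of the Schur-convex cone, which is exactly your ``Schur-convex $=$ $\preceq$-monotone'' tautology), whereas you phrase it as a preorder-regularity and boundedness issue --- same content, slightly different bookkeeping.
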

With this, \cref{thm:anonymous_1step_coupling} follows by a straightforward
combination of the aforementioned machinery. See \cref{weltanschauung} for
details.


\newcommand{\xmin}{x_{\min}}
\newcommand{\xmax}{x_{\max}}\newcommand{\filt}{\mathcal{F}_t}
\newcommand{\filtr}{\mathcal{F}_r}

\section{Upper Bound for \hmajority{3}}
\label{sec:upper_bound_3majority}

In this section, we provide a sublinear upper bound on the time
needed by \hmajority{3} to reach consensus with high probability. This is one
of our main results and is formulated in the following theorem.
\begin{theorem}
\label{thm:3maj}
Starting from \emph{any} configuration $c \in \mathcal{C}$, \hmajority{3}
reaches consensus w.h.p. in $\LDAUOmicron[small]{ n^{3/4}\log^{7/8} n }$ rounds.
\end{theorem}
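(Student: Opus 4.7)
The plan is to split the analysis into two phases, using the framework of Section \ref{sec:model_and_framework} for the first and an existing result of \cite{BCNPT16} for the second. Let $k^{\star} \coloneqq \lceil n^{1/4} \rceil$ (the precise exponent will fall out of balancing the two phase durations, together with the $\log^{7/8} n$ factor). Phase~(a) is the time to go from the initial (up to $n$) colors down to at most $k^{\star}$ surviving colors; Phase~(b) is the time to go from at most $k^{\star}$ colors to consensus. I will bound each phase by $\widetilde{O}(n^{3/4})$ and then apply a union bound.

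For Phase~(a), I plan to invoke Theorem \ref{cor:anonymous_coupling} with $P = P_{\aggfhmajority{3}}$ and $P' = P_{\aggfvoter}$. The main verification is that \hmajority{3} dominates \voter\ in the sense of Definition \ref{def:prot_dom}, which for \AC/-processes reduces to showing $\aggfhmajority{3}(c) \succeq \aggfvoter(\tilde c)$ whenever $c \succeq \tilde c$. I would prove this in two steps. First, for a single configuration, compare componentwise: from \eqref{eq:poll_up} and \eqref{eq:maj_up},
\[
\aggfhmajority{3}_i(c) - \aggfvoter_i(c)
= \frac{c_i}{n}\left(\frac{c_i}{n} - \Bigl\|\tfrac{c}{n}\Bigr\|_2^2\right),
\]
so \hmajority{3} shifts probability mass from below-average to above-average colors while preserving the total; sorting in non-increasing order then gives $\aggfhmajority{3}(c) \succeq \aggfvoter(c)$ by a straightforward partial-sum argument. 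Second, $\aggfvoter(c) = c/n$ is already monotone in $\succeq$, so $c \succeq \tilde c$ gives $\aggfvoter(c) \succeq \aggfvoter(\tilde c)$. Transitivity of $\succeq$ closes the step. Theorem \ref{cor:anonymous_coupling} then yields $T_{P_{\aggfhmajority{3}}}^{k^{\star}}(c) \leq^{\text{st}} T_{P_{\aggfvoter}}^{k^{\star}}(c)$, so I may replace \hmajority{3} by \voter\ and invoke the bound of Lemma \ref{lem:reducetok}, which gives $\widetilde{O}(n/k^{\star}) = \widetilde{O}(n^{3/4})$ rounds w.h.p.\ to reach at most $k^{\star}$ colors under \voter.

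For Phase~(b), once at most $k^{\star} = O(n^{1/4}) = o(n^{1/3})$ colors remain, I apply the result of \cite{BCNPT16} (cited in the related-work section) which bounds the remaining time by $O((k^2\sqrt{\log n} + k\log n)(k+\log n))$ w.h.p. Substituting $k \leq k^{\star}$ gives $\widetilde{O}(n^{3/4})$, so both phases fit within the claimed budget; combining the high-probability bounds via a union bound and tracking the $\log n$ factors produces the stated $O(n^{3/4}\log^{7/8} n)$ (the non-trivial exponent on the log arises from optimizing the threshold $k^{\star}$ against the respective log-factors in the two phase bounds, rather than fixing $k^{\star} = n^{1/4}$ exactly).

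The main obstacle I anticipate is purely in Phase~(a): since the statement is unconditional (no bias, arbitrary $c$), one cannot analyze \hmajority{3} directly via drift, and a direct coupling between \hmajority{3} and \voter\ at the level of individual node updates is elusive because a given \hmajority{3} node's decision is not naturally a function of a single sampled neighbor. The majorization/Strassen machinery of Section \ref{sec:anoagg} is precisely what circumvents this: it only asks for domination in expectation (which we have checked via a one-line algebraic comparison of $\aggfhmajority{3}$ and $\aggfvoter$) and then promotes it to an almost-sure coupling of configurations via Theorem \ref{thm:strassen:simplified}. The remaining subtlety is establishing the bound $\widetilde{O}(n/k)$ for \voter\ (Lemma \ref{lem:reducetok}); I expect this to follow from the standard duality with coalescing random walks on $K_n$, where pairwise coalescence happens in expected $O(n)$ rounds and a union bound (or a careful analysis of the number of distinct labels surviving after $t$ rounds) gives the desired $\widetilde{O}(n/k)$ bound.
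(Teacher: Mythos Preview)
Your proposal is correct and follows essentially the same two-phase strategy as the paper: couple \hmajority{3} to \voter via \cref{cor:anonymous_coupling} (your two-step verification $\aggfhmajority{3}(c)\succeq\aggfvoter(c)\succeq\aggfvoter(\tilde c)$ reduces to the same partial-sum inequality $\sum_{i\le k}x_i^2\ge\|x\|_2^2\sum_{i\le k}x_i$ that the paper proves directly), apply \cref{lem:reducetok} via the coalescing-random-walk duality, and finish with \cite{BCNPT16}. Your remark that the $\log^{7/8}n$ factor comes from balancing the threshold is exactly right---the paper sets $k^\star=n^{1/4}\log^{1/8}n$ rather than $n^{1/4}$.
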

The analysis is split into two phases, each consisting of
$\LDAUOmicron[small]{n^{3/4}\log^{7/8} n}$ rounds.
\begin{description}
\item[Phase~1: From up to $n$ to $n^{1/4}\log^{1/8}$ colors.]
    This is the crucial part of the analysis. Instead of analyzing \hmajority{3}
    directly, we use our machinery from \cref{sec:anoagg} to show that
    \hmajority{3} is not slower than \voter (\cref{cor:3-majvspolling}).
    Then, we prove that \voter reaches $\LDAUOmicron[small]{n^{1/4}}$ colors in
    $\LDAUOmicron[small]{ n^{3/4}\log^{7/8} n }$ rounds (\cref{lem:reducetok}).

\item[Phase~2: From up to $n^{1/4}\log^{1/8}  n$ to $1$ color (consensus).]
    Once we reached a configuration with $n^{1/4}\log^{1/8}  n$ colors, we can
    apply~\cite[Theorem 3.1]{BCNPT16} (see
    \cref{thm:SanpellegrinoWhoGetsTheJoke} in \cref{sec:techtools}), a
    previous analysis of \hmajority{3}. It works only for initial configurations
    with at most $k \leq n^{1/3 - \epsilon}$ colors ($\epsilon > 0$ arbitrarily
    small). In that case, \cite[Theorem 3.1]{BCNPT16} yields a runtime of
    $\LDAUOmicron[small]{(k^2 \log^{1/2} n + k \log n) \cdot (k+ \log n)}$. Since the first phase leaves us with
    $\LDAUOmicron[small]{n^{1/4}}$ colors, this immediately implies that the second
    phase takes $\LDAUOmicron[small]{ n^{3/4}\log^{7/8} n }$ rounds.

\end{description}
This section proceeds by proving the runtime of Phase~1 in two steps: dominating
the runtime of \hmajority{3} by that of \voter (\cref{sec:3majvspolling}) and
proving the corresponding runtime for \voter (\cref{sec:boundonvoter}).
Afterwards, \cref{sec:3majbound:finishing} combines these results together
with~\cite[Theorem 3.1]{BCNPT16} to prove \cref{thm:3maj}.

\subsection{Analysis of Phase~1: \hmajority{3} vs.~\voter}
\label{sec:3majvspolling}

We prove the following lemma.
\begin{lemma}
\label{cor:3-majvspolling}
Consider \voter ($\mathcal{V}$) and \hmajority{3} ($3M$) started from the same initial configuration $c
\in \mathcal{C}$. There is a coupling such that after any round, the number of
remaining colors in \voter is not smaller than those in \hmajority{3}. In
particular, the time \voter needs to reach consensus stochastically dominates
the time needed by \hmajority{3} to reach consensus, i.e.,
\[ T_{3M}^\kappa(c) \leq^{\text{st}} T_{\mathcal{V}}^\kappa(c). \]
\end{lemma}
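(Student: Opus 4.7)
The plan is to instantiate the framework of \cref{sec:anoagg} with $P = P_{\aggfhmajority{3}}$ and $P' = P_{\aggfvoter}$. By \cref{cor:anonymous_coupling}, once we establish that \hmajority{3} dominates \voter in the sense of \cref{def:prot_dom}, we immediately obtain $T_{\mathcal{V}}^\kappa(c) \geq^{\text{st}} T_{3M}^\kappa(c)$. For the stronger pathwise statement, we iterate the 1-step coupling of \cref{thm:anonymous_1step_coupling}: this yields a joint coupling under which $c_t^{(\mathcal{V})} \preceq c_t^{(3M)}$ at every round $t$. Since both vectors are non-negative integers summing to $n$, if \voter has exactly $k$ colors then its top-$k$ partial sum equals $n$, forcing the top-$k$ partial sum of $c_t^{(3M)}$ to equal $n$ as well and pinning its remaining entries to zero. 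Hence \voter has at least as many colors as \hmajority{3} in every round under the coupling.

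It therefore remains to verify the hypothesis of \cref{def:prot_dom}, i.e., that $c \succeq \tilde{c} \Rightarrow \aggfhmajority{3}(c) \succeq \aggfvoter(\tilde{c})$ for all $c, \tilde{c} \in \mathcal{C}$. Since $\aggfvoter(c) = c/n$ is linear, \voter's process function trivially preserves majorization: $c \succeq \tilde{c}$ implies $\aggfvoter(c) \succeq \aggfvoter(\tilde{c})$. By transitivity of $\succeq$, it suffices to prove the single-configuration comparison $\aggfhmajority{3}(c) \succeq \aggfvoter(c)$ for every $c \in \mathcal{C}$.

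Writing $x_i = c_i/n$ with the coordinates sorted non-increasingly, \cref{eq:poll_up} and \cref{eq:maj_up} give $\aggfvoter_i = x_i$ and $\aggfhmajority{3}_i = x_i + x_i^2 - x_i \norm{x}_2^2$, each summing to $1$. For any $\ell \in \intcc{n}$, writing $S_\ell = \sum_{i \leq \ell} x_i$, a direct expansion using $\norm{x}_2^2 = \sum_{i \leq \ell} x_i^2 + \sum_{j > \ell} x_j^2$ yields
\[
\sum_{i \leq \ell} \bigl( \aggfhmajority{3}_i - \aggfvoter_i \bigr) = (1 - S_\ell) \sum_{i \leq \ell} x_i^2 - S_\ell \sum_{j > \ell} x_j^2 = \sum_{i \leq \ell} \sum_{j > \ell} x_i x_j (x_i - x_j),
\]
where the second equality uses $1 - S_\ell = \sum_{j > \ell} x_j$ and $S_\ell = \sum_{i \leq \ell} x_i$. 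Every summand is non-negative since $x_i \geq x_\ell \geq x_{\ell+1} \geq x_j$ whenever $i \leq \ell < j$, establishing the required partial-sum inequality.

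The main obstacle is conceptual rather than computational: we want a round-by-round pathwise comparison between two processes with substantially different one-step laws, and writing down an explicit coupling by hand would be delicate. The framework of \cref{sec:anoagg} circumvents this by converting the expectation-level ordering $\aggfhmajority{3}(c) \succeq \aggfvoter(c)$—the short Schur-ordering calculation above—into the existence of the desired coupling via the Strassen-style \cref{thm:strassen:simplified}. Once that machinery is invoked, the work specific to this lemma reduces to the elementary partial-sum manipulation presented above.
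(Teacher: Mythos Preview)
Your proof is correct and follows essentially the same route as the paper: both reduce the lemma to \cref{cor:anonymous_coupling} and then verify the domination hypothesis $c \succeq \tilde c \Rightarrow \aggfhmajority{3}(c) \succeq \aggfvoter(\tilde c)$. The only difference is in the algebra for the key inequality $\sum_{i\le \ell} x_i^2 \ge \norm{x}_2^2 \sum_{i\le \ell} x_i$: the paper obtains it by showing the ratio $\bigl(\sum_{i\le k} x_i^2\bigr)/\bigl(\sum_{i\le k} x_i\bigr)$ is non-increasing in $k$, whereas you factor the partial-sum difference directly as the manifestly non-negative double sum $\sum_{i\le \ell}\sum_{j>\ell} x_i x_j (x_i - x_j)$, which is a bit cleaner; your explicit transitivity step through $\aggfvoter(c)$ and your spelled-out ``top-$k$ partial sum equals $n$'' argument for the color-count comparison are also sound and slightly more detailed than the paper's treatment.
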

\begin{proof}
By \cref{cor:anonymous_coupling}, all we have to prove is $c \succeq
\tilde{c} \Rightarrow \aggfhmajority{3}(c) \succeq \aggfvoter(\tilde{c})$ (see \cref{sec:anoagg}). To
this end, consider two configurations $c, \tilde{c} \in \mathcal{C}$ with $c
\succeq \tilde{c}$. Let $p \coloneqq \aggfhmajority{3}(c)$ and $\tilde{p}
\coloneqq \aggfvoter(\tilde{c})$. We have to show $p \succeq \tilde{p}$. Since
these are probability vectors, we have $\norm{p}_1 = 1 = \norm{\tilde{p}}_1$.
It remains to consider the partial sums for $k \in \intcc{n}$. For this, let $x
\coloneqq c / n$ and $\tilde{x} \coloneqq \tilde{c} / n$. Remember that $p_i =
x_i^2 + (1-\norm{x}_2^2) \cdot x_i$ (\cref{eq:maj_up}) and $\tilde{p}_i =
\tilde{x}_i$ (\cref{eq:poll_up}). In the following, we assume (w.l.o.g.) $p
= \sortDown{p}$ and $\tilde{p} = \sortDown{\tilde{p}}$ (this implies $x =
\sortDown{x}$ and $\tilde{x} = \sortDown{\tilde{x}}$). We compute
\begin{equation}
\begin{aligned}
        \sum_{i=1}^{k} p_i
      - \sum_{i=1}^{k} \tilde{p}_i
&=      \sum_{i=1}^{k} x_i^2
      + \sum_{i=1}^{k} x_i
      - \norm{x}_2^2 \sum_{i=1}^{k} x_i
      - \sum_{i=1}^{k} \tilde{x}_i\\
&\geq   \sum_{i=1}^{k} x_i^2
      - \norm{x}_2^2 \sum_{i=1}^{k} x_i
.
\end{aligned}
\end{equation}
We have to show that this last expression is non-negative, which is equivalent
to
\begin{equation}
      \norm{x}_2^2
\leq  \left( \sum_{i=1}^{k} x_i^2 \right) / \left( \sum_{i=1}^{k} x_i \right)
.
\end{equation}
This holds trivially for $k = n$ (where we have equality). Thus, it is
sufficient to show that $( \sum_{i=1}^{k} x_i^2 ) / ( \sum_{i=1}^{k} x_i )$ is
non-increasing in $k$. That is, for any $k \in \intcc{n-1}$ we seek to show the
inequality
\begin{equation}
\label{eqn:3majvspol:toshow}
     \frac{ \sum_{i=1}^{k+1} x_i^2 }{ \sum_{i=1}^{k+1} x_i }
=    \frac{ \sum_{i=1}^{k} x_i^2 + x_{k+1}^2 }{ \sum_{i=1}^{k} x_i + x_{k+1} }
\leq \frac{ \sum_{i=1}^{k} x_i^2 }{ \sum_{i=1}^{k} x_i }
.
\end{equation}
This inequality is of the form $\frac{A + x}{B + x} \leq \frac{A}{B}$, where $A,
B, x > 0$. Rearranging shows that this is equivalent to $x \leq A / B$. Thus,
\cref{eqn:3majvspol:toshow} holds if and only if $x_{k+1} \leq (
\sum_{i=1}^{k} x_i^2 ) / ( \sum_{i=1}^{k} x_i )$. This last inequality holds via
\begin{math}
     x_{k+1} \cdot \sum_{i=1}^{k} x_{i}
=    \sum_{i=1}^{k} x_{i} \cdot x_{k+1}
\leq \sum_{i=1}^{k} x_{i} \cdot x_{i}
=    \sum_{i=1}^{k} x_{i}^{2}
\end{math},
where we used $x = \sortDown{x}$. This finishes the proof.
\end{proof}

\subsection{Analysis of Phase~1: A Bound for \voter}
\label{sec:boundonvoter}

We analyze the time the  \voter process takes to reduce the number of remaining colors
from $n$ to $k$. One should note that~\cite{BGKM16} studies  a similar process.
However, their analysis relies critically on the fact that their process is lazy
(i.e., nodes do not sample another node with probability $1/2$), while  our proof does
not require any laziness.

We make use of the well-known duality (via time reversal) between the
\voter process and \emph{coalescing random walks}.
In the coalescing random walks process there are initially $n$ independent random walks,
one placed at each of the $n$ nodes. While performing synchronous steps,
whenever two or more random walks meet, they coalesce into a single random walk.
Let $T_C^k$ denote the number of steps it takes to reduce the number of random
walks from $n$ to $k$ in the coalescing random walks process (the
\emph{coalescence time}). Similarly, let $T_{\mathcal V}^k$ denote the number of rounds it
takes \voter to reduce the number of remaining colors from $n$ to $k$.

Here we only sketch the main ideas of the proof, while the
full proof is given in \cref{topfpflanze}.

\begin{lemma}
\label{lem:reducetok}
Consider an arbitrary initial configuration $c \in \mathcal{C}$. \voter reaches
a configuration $c'$ having at most $k$ remaining colors w.h.p. in
$\LDAUOmicron[small]{\frac n k \, \log n}$ rounds, i.e.,
$\Pr{ T_\mathcal{V}^k =  \LDAUOmicron[small]{\frac n k \, \log n}} \geq 1-1/n $.
\end{lemma}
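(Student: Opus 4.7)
The plan is to exploit the classical duality (via time reversal) between the \voter process and coalescing random walks (CRW) on $K_n$, and then bound the number of surviving walks by combining a quadratic drift argument with Markov's inequality applied in a restart fashion. Recall that in CRW each walk independently jumps to a uniformly random node at every step and two walks landing on the same node coalesce. The duality guarantees that, starting from any \voter configuration, the number of remaining colors at time $t$ is stochastically dominated by the number $M_t$ of walks surviving in CRW at time $t$ (started with one walk per node); hence $T_\mathcal{V}^k \leq^{\text{st}} T_C^k$, and it suffices to prove $T_C^k = \LDAUOmicron{(n/k)\log n}$ w.h.p.

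I would first estimate $\mu_t := \E{M_t}$. By vertex-transitivity of $K_n$, conditional on $M_t = m$ the $m$ walks jump to $m$ independent uniform locations in $[n]$, so
\[
\E{M_{t+1}\mid M_t = m} \;=\; n\bigl(1-(1-1/n)^m\bigr) \;\leq\; m - c\,\frac{m^2}{n}
\]
for an absolute constant $c > 0$, uniformly in $m \in \{2,\dots,n\}$. The small-$m$ case follows from the Taylor expansion of $(1-1/n)^m$; the large-$m$ case from $(1-1/n)^m \geq e^{-m/(n-1)}$ together with a direct check showing the one-step decrement is $\Omega(m)$ once $m = \Omega(n)$. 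Taking total expectations and using $\E{M_t^2} \geq \mu_t^2$ (Jensen) yields the Riccati-type recursion $\mu_{t+1} \leq \mu_t - c\mu_t^2/n$. Passing to reciprocals gives $1/\mu_{t+1} \geq 1/\mu_t + c/n$, which telescopes to $\mu_t \leq n/(ct)$.

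Next, I would choose $T^* := \lceil 2n/(ck)\rceil$ so that $\mu_{T^*} \leq k/2$, and Markov's inequality gives $\Pr[M_{T^*} > k] \leq 1/2$. Because the law of $(M_t)$ on $K_n$ depends only on the current number of walks, on the event $\{M_{T^*} > k\}$ the process restarted at time $T^*$ is again a CRW with at most $n$ walks, so the same bound yields $\Pr[M_{2T^*} > k \mid M_{T^*} > k] \leq 1/2$. Iterating $\lceil \log_2 n \rceil$ such blocks, together with the strong Markov property, gives $\Pr[M_{T^* \lceil\log_2 n\rceil} > k] \leq 2^{-\lceil\log_2 n\rceil} \leq 1/n$; combined with $T_\mathcal{V}^k \leq^{\text{st}} T_C^k$ this yields the desired $T_\mathcal{V}^k = \LDAUOmicron{(n/k) \log n}$ w.h.p.

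The main obstacle will be establishing the uniform quadratic drift $\E{M_{t+1}\mid M_t = m} \leq m - cm^2/n$ across the whole range $m \in [2,n]$, since the estimate is driven by genuinely different mechanisms in the small-$m$ (rare-collision) and large-$m$ (many-collisions-per-step) regimes and the constants must be matched up at the transition. Everything else — Jensen's inequality, the reciprocal telescoping, and the Markov-inequality-based restart — is entirely standard.
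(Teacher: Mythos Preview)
Your argument is correct and follows the same overall strategy as the paper: reduce to coalescing random walks via the voter--CRW duality, establish a quadratic one-step drift $\E{M_{t+1}\mid M_t=m}\le m-cm^2/n$ for the number of surviving walks, and then upgrade to high probability by Markov's inequality with $O(\log n)$ geometric restarts. The differences are only in the technical implementation. For the drift, the paper splits the $m$ walks into two halves and case-splits on whether the first half lands on many or few distinct nodes (obtaining $c=1/10$), whereas you work with the closed form $n(1-(1-1/n)^m)$; your acknowledged obstacle of matching constants across the small-$m$ and large-$m$ regimes is real but routine. For translating drift into a time bound, the paper invokes a variable drift theorem (Lehre--Witt) to get $\E{T_C^k}\le 20\,n/k$ and then applies Markov to the hitting time $T_C^k$, while you take expectations, use Jensen to obtain the Riccati recursion $\mu_{t+1}\le\mu_t-c\mu_t^2/n$, telescope in reciprocals to $\mu_t\le n/(ct)$, and apply Markov to the count $M_{T^*}$. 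Your route is slightly more self-contained since it avoids the drift-theorem blackbox; the paper's route yields the expected hitting time as a direct byproduct.
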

\begin{proof}[Sketch of Proof.]
%
    It is
    well-known (e.g., \cite{AF14}), that $T_{\mathcal V}^1=T_C^1$.
    This statement generalizes for all $k \in \intcc{n}$
    (see \cref{lem:coalvote} in \cref{sec:techtools} for a proof)
    to
    \begin{equation}
        T_{\mathcal V}^k=T_C^k.
        \label{Eichhoernchenkuchen}
    \end{equation}
    Thanks to the previous identity, we
    can prove the lemma's statement by proving that w.h.p. $T_C^k = \LDAUOmicron{ \frac n
     k \log n}$. To this end, we show that $\Ex{T_C^k} = \LDAUOmicron{n / k}$.
     From this we can easily derive a high probability statement (see the full proof in \cref{topfpflanze}).
     We now analyze the coalescing random walks.

    Let $X_t$ denote the number of coalescing random walks at time $t$. We have
    $X_0 = n$ and $T_C^k = \min\set{ t \geq 0 | X_t \leq k }$.
    One can argue (see the full proof for details) that in expectation
     \begin{align}\label{Hundefutter}
     \E{X_{t+1}~|~X_t=x} \leq x-  \frac{x^2}{10 n} .
     \end{align}
   Using this expected drop together with a
   drift theorem (\cref{lem:theoverlooked} in Appendix
   \ref{verdammten}) we finally get
   $ \E{T_C^k} \leq 20 \frac nk $.
   Using this and \cref{Eichhoernchenkuchen} we finally derive
   $
   \E{T_{\mathcal V}^k} =\E{T_C^k} \leq 20 \frac nk.
   $
   \qed


\section{Lower Bound for \twochoice}
\label{sec:lower_bound_2choice}
This section gives an almost linear worst-case lower bound on the time needed by
\twochoice to reach consensus with high probability. It turns out that, when
started from an almost balanced configuration, the consensus time is dictated by
the time it takes for one of the colors to gain a support of $\LDAUOmega{\log
n}$.
To prove this result, we prove a slightly stronger statement, that captures the
slow initial part of the process when started from configurations with a maximal
load of $\ell$. Here we only provide a sketch of proof. The full proof is given in \cref{sec:Dominik}
\begin{theorem}
\label{lem:lowerTCstrong}
Let $\gamma$ be a sufficiently large constant.
Consider the \twochoice process
starting from any initial configuration $c \in \mathcal{C}$. Let $\ell \coloneqq \max_i c_i(0)$ be the support of the largest color.
Then, for  $\ell' \coloneqq \max\{ 2\ell, \gamma \log n\}$, it holds with high probability that no color has a support  larger than $\ell'$ for $n/(\gamma \ell')$ rounds. In symbols,
\begin{equation}
    \Pr{ \max_i\hspace{2pt} c_i(t)   > \ell'\text{ for some }  t< \frac{n}{\gamma\ell'}} \leq \frac 1n.
\end{equation}
In particular, starting from the $n$-color configuration, 
it holds with high probability that no color has a support  larger than $\gamma \log n$ for $  \frac{n}{\gamma^2\log n}$ rounds.

\end{theorem}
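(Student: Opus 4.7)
I would reduce the theorem to a Chernoff bound on the total number of ``successful matches'' in $T := \lfloor n/(\gamma \ell')\rfloor$ rounds, via a stopping-time argument. Introduce $\tau := \min\{s \ge 1 : \max_i c_i(s) > \ell'\}$; the goal is to show $\Pr[\tau \le T] \le 1/n$. The key structural observation is that in \twochoice a node ends round $s$ with color $i$ only if it had color $i$ already (and its two samples did not both match to a different color) or its two samples in round $s$ were both color $i$. Writing $A_s^{(i)}$ for the number of nodes of the latter type, this yields the pointwise bound
\[
c_i(s) \le c_i(s-1) + A_s^{(i)},
\]
since every color-$i$ node at time $s$ was either already color $i$ at time $s-1$ or is counted by $A_s^{(i)}$. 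Iterating gives $c_i(s) \le c_i(0) + \sum_{r=1}^{s} A_r^{(i)}$ for every $s$.

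\textbf{Coupling / domination.} For each node $u$, round $s$ and color $i$, let $X_{u,s}^{(i)}$ be the indicator that $u$'s two samples in round $s$ are both color $i$ and $s \le \tau$. On $\{s \le \tau\}$ we have $c_i(s-1) \le \ell'$, so $\Pr\bigl[X_{u,s}^{(i)} = 1 \,\big|\, \mathcal{F}_{s-1}\bigr] \le (\ell'/n)^2$. A standard conditional-Bernoulli domination lemma (iteratively coupling indicators with higher success probability via fresh randomness) then yields
\[
S^{(i)} := \sum_{s=1}^{T}\sum_{u=1}^{n} X_{u,s}^{(i)} \;\le^{\text{st}}\; \mathrm{Binomial}\bigl(nT,(\ell'/n)^2\bigr).
\]
Combining with the deterministic bound above and using $c_i(0) \le \ell \le \ell'/2$ (because $\ell' \ge 2\ell$), we get $c_i(s \wedge \tau) \le \ell'/2 + S^{(i)}$ for all $s \le T$; consequently $\{\tau \le T\} \subseteq \bigl\{\exists\, i : S^{(i)} > \ell'/2\bigr\}$.

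\textbf{Chernoff and conclusion.} The dominating Binomial has mean $\mu = T(\ell')^2/n \le \ell'/\gamma$. The standard tail bound $\Pr[\mathrm{Bin}(m,p) \ge k] \le (em p/k)^k$ with $k = \ell'/2$ gives
\[
\Pr\!\bigl[S^{(i)} \ge \ell'/2\bigr] \;\le\; (2e/\gamma)^{\ell'/2}.
\]
For $\gamma$ a sufficiently large constant and using $\ell' \ge \gamma \log n$, this is at most $n^{-3}$, so a union bound over the (at most $n$) colors yields $\Pr[\tau \le T] \le 1/n$. The ``in particular'' statement drops out by plugging in the $n$-color initial configuration, where $\ell = 1$, hence $\ell' = \gamma \log n$ and $T = n/(\gamma^2 \log n)$. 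The only delicate point is the conditional Bernoulli domination in the second paragraph: one must verify that each $X_{u,s}^{(i)}$ can be coupled to a fresh Bernoulli$((\ell'/n)^2)$ variable that dominates it a.s., uniformly over the history up to round $s-1$. This is a routine construction (sample the dominating coin first, then subsample conditionally on the history), but it is the only step beyond pure algebra and Chernoff; everything else is a direct calculation.
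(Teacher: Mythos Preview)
Your proposal is correct and follows essentially the same approach as the paper: both arguments bound the per-round increment of a fixed color by the number of nodes whose two samples match that color, dominate these increments (up to the first time the bound $\ell'$ is exceeded) by i.i.d.\ Bernoulli$\bigl((\ell'/n)^2\bigr)$ variables, apply a Chernoff bound to the resulting Binomial sum over $n/(\gamma\ell')$ rounds, and finish with a union bound over colors. The only cosmetic difference is that the paper packages the domination as an explicit auxiliary increasing process $P(t)=P(t-1)+\mathrm{Bin}(n,(\ell'/n)^2)$ coupled to $c_i(t)$, whereas you phrase it as a stopping-time truncation plus conditional-Bernoulli domination; the content is identical.
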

\begin{proof}[Sketch of Proof]
Let $T_i=\min \{ t\geq 0 ~|~ c_i(t) > \ell' \}$.
For any fixed opinion $i\leq k$ we show that $\Pr{T_i < n/(\gamma \ell')} \leq
1/n^{2}$, so that, by a union bound over all opinions and using that $T=\min \{
T_i ~|~i\leq k\}$, we obtain $\Pr{T < n/(\gamma \ell')} \leq 1/n$.
Intuitively, we would like to show that, conditioning on $c_i  \leq \ell'$, the
expected number of nodes joining opinion $i$ is dominated by a binomial
distribution with parameters $n$ and $p = (\ell'/n)^2$.
The main obstacle to this is that naïvely applying Chernoff bounds for every time step
yields a weak bound, since with constant probability at each round at least one
color increases its support by a constant number of nodes.
Instead, we consider a new process $\mathcal P$ in which the number $P(t) $ of
nodes supporting color $i$ at time $t$ majorizes $c_i(t)$ as long as $P(t)
\leq \ell'$; we will then show that, after a certain time w.h.p.
$P(t)$ is still smaller than $\ell'$ implying that $\mathcal{P}$ indeed majorizes the original process. 
Using the fact that in $\mathcal{P}$ we can simply apply Chernoff
bounds over several rounds, we can finally get $c_i \leq P(t) \leq \ell'$ w.h.p.. 

Formally, process $\mathcal P$ is defined as follows. $P(0):=\ell$ and $P(t+1)
=P(t)+\sum_{j \leq n} X_j^{(t)}$, where $X_j^{(t)}$ is a Bernoulli random variable with
$\Pr{X_i=1} = p$ and, by a standard coupling, it 
is $1$ whenever node $j$ sees two times color $i$ at round $t$ (note that the latter event
happens with probability at most $p$ for any $t<T_i$).
By definition, if $t<T_i$ it holds $c_i(t)\leq \ell'$, which implies that the probability that any node in the original process gets opinion $i$ is at most $p$. Thus, we can couple
\twochoice and $P$ for $t\leq T_i$ so that $c_i(t) \leq P(t)$. This implies
that 
\begin{equation*}
    T' \coloneqq \min\{ t\geq 0 ~|~ P(t) \geq \ell'\}\preceq T_i.
\end{equation*} 
Relying on Chernoff bounds, we show in the full proof (\cref{sec:Dominik})  that $\Pr{T' < n/(\gamma \ell')}< 1/n^{2}$,
and from this we derive
$    \Pr{ T< n/(\gamma\ell') } 
\leq n^{-2}.
$
\end{proof}


\section{Conclusion \& Future Work}
\label{sec:hardtoget}

This section briefly discusses some directions of future work and our conjecture
that our framework might help to gain a better understanding of how different
(\AC/-) processes compare to each other.

\paragraph{Fault Tolerance}
As mentioned in the introduction, previous studies~\cite{BCNPST14, BCNPT16,
CER14, EFKMT16} show that \twochoice and \hmajority{3}  are 
consensus protocols     that can tolerate   dynamic,  worst-case adversarial faults.
More in details, the protocols work even in the presence  of an adversary that
 can, in every round, corrupt the state of a bounded set of
nodes. The goal in this setting is to achieve a stable regime in which
\enquote{almost-all} nodes support the same \emph{valid} color (i.e. a color initially
supported by at least one non-corrupted node). The size of the corrupted set is
one of the studied quality parameters and depends on the number $k$ of colors
and/or on the bias in the starting configuration. For instance,
in~\cite{BCNPT16} it is proven that, for $k = \LDAUomicron{n^{1/3}}$,
\hmajority{3} tolerates a corrupted sets of size $\LDAUOmicron{ \sqrt{n} / (
k^{5/2} \log n ) }$. A natural important open issue is to investigate whether
our framework for \AC/-processes can be used to make statements about
fault-tolerance properties in this (or in similar) adversarial models. We moderately lean toward
thinking that our analysis is sufficiently general and \enquote{robust} to be
suitably adapted in order to cope with this 
adversarial scenario over a wider range of $k$ and bias w.r.t. the relative previous analyses.

\paragraph{Towards a Hierarchy}
Consider the process functions of the general \hmajority{h} process for
arbitrary $h \in \N$. Intuitively, \hmajority{h} should be (stochastically)
slower than \hmajority{(h+1)}. We strongly believe this result holds. However,
naïvely applying our machinery to prove this does not work and needs to be
amended. Our conjecture that such a \enquote{hierarchy} for
$\hmajority{h}$ for different $h \in \N$ holds is backed by the proof of
\cref{cor:3-majvspolling} (which shows this for $h \in \set{1,2,3}$, since
the \voter process is actually equivalent to \hmajority{1} and \hmajority{2}).
\begin{conjecture}
\label{lem:hmajority_leq_h+1majority}
For $h \in \N$, we can couple \hmajority{h} and \hmajority{(h+1)} such that the
latter never has more remaining colors than the former. In particular,
\hmajority{(h+1)} is stochastically faster than \hmajority{h}.
\end{conjecture}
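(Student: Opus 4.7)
My plan is to apply the \AC/-framework from \cref{sec:anoagg}: it suffices to show that $P_{\aggfhmajority{(h+1)}}$ \emph{dominates} $P_{\aggfhmajority{h}}$ in the sense of \cref{def:prot_dom}, i.e.\ that $c \succeq \tilde{c}$ implies $\aggfhmajority{(h+1)}(c) \succeq \aggfhmajority{h}(\tilde{c})$ for all $c, \tilde{c} \in \mathcal{C}$. Then \cref{cor:anonymous_coupling} yields $T^{\kappa}_{(h+1)M}(c) \leq^{\text{st}} T^{\kappa}_{hM}(c)$ for every $\kappa \in \N$, while the sample-path coupling statement of the conjecture follows by iterating the 1-step coupling of \cref{thm:anonymous_1step_coupling}, using that $c \succeq \tilde{c}$ forces $c$ to have no more nonzero components than $\tilde{c}$.

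I would split the required implication into two steps: (i) a pointwise step-up $\aggfhmajority{(h+1)}(c) \succeq \aggfhmajority{h}(c)$ for every $c \in \mathcal{C}$, and (ii) monotonicity of the process function, $c \succeq \tilde{c} \Rightarrow \aggfhmajority{h}(c) \succeq \aggfhmajority{h}(\tilde{c})$. Chaining them gives $\aggfhmajority{(h+1)}(c) \succeq \aggfhmajority{h}(c) \succeq \aggfhmajority{h}(\tilde{c})$ whenever $c \succeq \tilde{c}$, which is exactly the dominance condition.

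For (i), the natural setup is a sample-level coupling: draw $Y_1, \ldots, Y_{h+1}$ i.i.d.\ from $x \coloneqq c/n$, let $A_h$ be the plurality of $Y_1, \ldots, Y_h$ with uniform tie-breaking, and let $A_{h+1}$ be the plurality of $Y_1, \ldots, Y_{h+1}$. Then $A_h \sim \aggfhmajority{h}(c)$ and $A_{h+1} \sim \aggfhmajority{(h+1)}(c)$. The intuition is that the extra sample $Y_{h+1}$ either reinforces an existing plurality or resolves a tie in favor of an already-modal color, so the law of $A_{h+1}$ should be Schur-more-concentrated than that of $A_h$. I would make this rigorous by verifying the equivalent inequality $\sum_i \phi\bigl(\Pr{A_{h+1}=i}\bigr) \geq \sum_i \phi\bigl(\Pr{A_h=i}\bigr)$ for every convex $\phi$, via a conditional argument on the multiset $\{Y_1, \ldots, Y_h\}$ and a case analysis on the color class of $Y_{h+1}$. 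For (ii), $\aggfhmajority{h}_i(c)$ is a symmetric polynomial in the normalized frequencies with $x_i$ in a distinguished role, so Schur-monotonicity should follow from standard partial-derivative criteria applied coordinate-wise to the partial sums of the sorted process function.

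The main obstacle is step (i): the authors explicitly note that na\"ively applying their machinery fails for general $h$. The plurality law with random tie-breaking has a combinatorial closed form summing contributions from ties of multiplicities $2, 3, \ldots$ mixed with strict pluralities, and proving the Schur-concentration step-up from $h$ to $h+1$ is delicate because the tie structure differs sharply between even and odd $h$. I expect the cleanest route is an induction on $h$ that handles the two parities separately, perhaps by adding samples in pairs so that the combinatorics of tie-breaks remain tractable and the inductive step can exploit a direct sample-multiset coupling together with the Schur-convex characterization of majorization.
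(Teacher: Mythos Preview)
The statement you are attempting to prove is labeled a \emph{conjecture} in the paper; it is not proven there, and \cref{sec:counterexample} gives an explicit counterexample showing that the very route you propose --- establishing dominance in the sense of \cref{def:prot_dom} and then invoking \cref{cor:anonymous_coupling} --- cannot work.

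Concretely, your step~(ii), Schur-monotonicity of the process function $\aggfhmajority{h}$, is false already for $h=3$. Take (normalized) configurations $a = (1/2, 1/2, 0, 0) \succeq (1/2, 1/6, 1/6, 1/6) = b$. By symmetry $\aggfhmajority{3}(a) = (1/2, 1/2, 0, 0)$, whereas the computation in \cref{sec:counterexample} shows that the largest coordinate of $\aggfhmajority{3}(b)$ equals $7/12 > 1/2$. Hence $\aggfhmajority{3}(a) \not\succeq \aggfhmajority{3}(b)$, and (ii) fails. The same pair also refutes the chained conclusion you are after: since $\aggfhmajority{4}(a) = (1/2,1/2,0,0)$ as well, one gets $a \succeq b$ but $\aggfhmajority{4}(a) \not\succeq \aggfhmajority{3}(b)$, which is precisely the paper's counterexample to the dominance hypothesis needed for \cref{cor:anonymous_coupling}. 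So this is not the ``main obstacle is step~(i)'' situation you diagnose: step~(ii) is provably false, and no amount of parity-aware tie-break analysis in step~(i) will rescue the argument. The authors in fact remark that their failed attempts suggest counterexamples exist for \emph{any} majorization-based comparison using a total order on configuration vectors; a proof of the conjecture, if one exists, would require ideas genuinely beyond the \AC/-framework.
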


However, as we show in \cref{sec:counterexample} via a counterexample, it
turns out that \cref{thm:anonymous_1step_coupling} is not strong enough to
derive \cref{lem:hmajority_leq_h+1majority}. In fact, our failed attempts in
adapting our approach may suggest that similar counterexamples exist for any
majorization attempt that uses a total order on vectors.

\bibliographystyle{alpha}
\bibliography{references}

\appendix

\section{Auxiliary Tools and Full Proofs}
\label{sec:techtools}

\subsection{Tools from Majorization Theory}

\begin{proposition}[{\cite[Proposition~11.E.11]{Marshall:2011aa}},\cite{Rinott:1977aa}]
\label{lem:multinomial:schur_convex}
For $N \in \N$ and a probability vector $\Theta \in \intcc{0,1}^l$, consider a
random vector $X$ having the multinomial distribution $\Mult(N,\Theta)$. Let
\begin{equation}
\phi\colon \Set{ x \in \N_0^l | \sum_{ i \in \intcc{l} } x_i = N } \to \R
\label{eq:afunc}
\end{equation}
be such that $\Ex{\phi(X)}$ exists. Note that this expected value depends on
$\Theta$. Define the function $\psi$ on probability vectors as $\psi(\Theta)
\coloneqq \Ex{\phi(X)}$. If $\phi$ is Schur-convex, then so is $\psi$.
\end{proposition}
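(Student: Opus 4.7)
The plan is to prove Schur-convexity of $\psi$ via the standard three-step reduction: first reduce to T-transforms, then use the hierarchical structure of the multinomial to collapse the problem to a two-dimensional (binomial) comparison, and finally settle that binomial comparison by direct calculus on tail probabilities. Since $\phi$ is Schur-convex (hence symmetric), $\psi$ is symmetric in $\Theta$, and since every majorization $\Theta \succeq \Theta'$ on the probability simplex can be written as a finite composition of T-transforms (Robin Hood transfers), it suffices to prove $\psi(\Theta) \geq \psi(\Theta')$ in the case where $\Theta'$ differs from $\Theta$ only in two coordinates, say $1$ and $2$, with common sum $s \coloneqq \theta_1 + \theta_2 = \theta_1' + \theta_2'$ and (after relabeling) $\theta_1 \geq \theta_1' \geq s/2 \geq \theta_2' \geq \theta_2$.

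For this single T-transform I would couple $X \sim \Mult(N,\Theta)$ and $X' \sim \Mult(N,\Theta')$ via the hierarchical sampling of the multinomial. The vector $(X_1+X_2,X_3,\ldots,X_l)$ has distribution $\Mult\bigl(N,(s,\theta_3,\ldots,\theta_l)\bigr)$, which is the same under $\Theta$ and $\Theta'$; couple so it is identical in the two processes. Conditional on $W \coloneqq X_1 + X_2 = w$ and the values of $X_3,\ldots,X_l$, the first two coordinates of $X$ are distributed as $(B_p, w - B_p)$ with $B_p \sim \mathrm{Binomial}(w,p)$ and $p = \theta_1/s$, while the $\Theta'$-process yields $(B_{p'}, w - B_{p'})$ with $p' = \theta_1'/s$; crucially, $p \geq p' \geq 1/2$. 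Schur-convexity is preserved when one fixes a block of coordinates (because $a \succeq b$ implies $(a,c) \succeq (b,c)$, as one checks on partial sums), and on pairs with fixed sum $w$ a Schur-convex function is exactly a non-decreasing function of $|2X_1 - w|$. Hence the problem collapses to showing $|2B_p - w| \,\geq^{\text{st}}\, |2B_{p'} - w|$ whenever $p \geq p' \geq 1/2$.

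This binomial comparison is the main obstacle. The naive increasing coupling $B_{p'} \leq B_p$ fails: when both variables fall below $w/2$ the desired inequality on $|2B - w|$ reverses. I would instead show monotonicity in $p$ of the two-sided tail $\Pr{|2B_p - w| \geq t}$ for every integer $t \geq 0$ and $p \geq 1/2$ by direct differentiation. Using the classical identity $\tfrac{d}{dp}\Pr{B_p \leq j} = -w \binom{w-1}{j} p^j (1-p)^{w-1-j}$ together with its symmetric companion for the upper tail, then setting $a = \lfloor (w-t)/2 \rfloor$ and $b = \lceil (w+t)/2 \rceil$ and noting $\binom{w-1}{a} = \binom{w-1}{b-1}$, the derivative of the two-sided tail collapses to
\[
    w \binom{w-1}{a}\, p^{a}(1-p)^{a}\, \bigl(p^{b-1-a} - (1-p)^{b-1-a}\bigr),
\]
which is non-negative because $p \geq 1-p$ and $b - 1 - a \geq 0$. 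Minor case analysis handles the parities of $w$ and $t$, but the sign is preserved throughout. Integrating back through the coupling and the conditioning gives $\psi(\Theta) \geq \psi(\Theta')$ for a single T-transform; iterating across the T-transform decomposition of an arbitrary majorization $\Theta \succeq \Theta'$ then yields the Schur-convexity of $\psi$.
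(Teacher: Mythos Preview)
The paper does not prove this proposition at all: it is quoted as a black box from Marshall--Olkin--Arnold and Rinott, and then invoked in the proof of \cref{thm:anonymous_1step_coupling}. Your proposal therefore supplies strictly more than the paper does, namely a self-contained argument for the cited fact.

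Your argument is correct. The reduction to a single T-transform is standard (Hardy--Littlewood--P\'olya), the hierarchical coupling of the multinomial via $W=X_1+X_2$ is the right way to isolate the two moving coordinates, and the observation that a Schur-convex function restricted to a two-dimensional fiber $x_1+x_2=w$ is monotone in $|2x_1-w|$ is exactly what is needed. The only nontrivial step is the stochastic monotonicity of $|2B_p-w|$ in $p\in[1/2,1]$, and your derivative computation is clean: with $a+b=w$ one indeed has $\binom{w-1}{a}=\binom{w-1}{b-1}$, the two boundary terms from the telescoping identity combine into $w\binom{w-1}{a}p^{a}(1-p)^{a}\bigl(p^{\,b-1-a}-(1-p)^{\,b-1-a}\bigr)$, and $b-1-a=w-2a-1\geq 0$ for $t\geq 1$ (the case $t=0$ is vacuous). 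Since the support of $X$ is finite, all integrability caveats disappear. In short, the paper takes the result on faith; you have given a complete, elementary proof along the classical Robin-Hood/coupling route.
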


\begin{theorem}[{Strassen's Theorem~\cite[17.B.6]{Marshall:2011aa}}]
\label{thm:strassen:original}
Suppose that $\mathcal{A} \subseteq \R^n$ is closed and that
$\leq_{\mathcal{C}}$ is the preorder of $\mathcal{A}$ generated by the convex
cone $\mathcal{C}$ of real-valued functions defined on $\mathcal{A}$. Suppose
further that $\set{ (x,y) | x \leq_{\mathcal{C}} y }$ is a closed set. Then the
conditions
\begin{enumerate}[(i)]
\item\label{thm:strassen:original:cond1}
    $X \leq_{\mathcal{C}}^{\text{st}} Y$ and
\item\label{thm:strassen:original:cond2}
    there exists a pair $\tilde{X}, \tilde{Y}$ of random variables such that
    \begin{enumerate}[(a)]
    \item
        $X$ and $\tilde{X}$ are identically distributed, $Y$ and $\tilde{Y}$ are
        identically distributed and
    \item
        $\Pr{ \tilde{X} \leq_{\mathcal{C}} \tilde{Y} } = 1$
    \end{enumerate}
\end{enumerate}
are equivalent if and only if $\mathcal{C}^+ = \mathcal{C}^*$; i.e., the
stochastic completion $\mathcal{C}^+$ of $\mathcal{C}$ is complete.
\end{theorem}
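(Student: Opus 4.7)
The plan is to split the biconditional into the easy direction $(ii) \Rightarrow (i)$, the hard direction $(i) \Rightarrow (ii)$ under the hypothesis $\mathcal{C}^+ = \mathcal{C}^*$, and a converse showing that completeness is in fact necessary. Direction $(ii) \Rightarrow (i)$ is immediate by linearity of expectation: given a coupling $(\tilde X, \tilde Y)$ with $\tilde X \leq_{\mathcal{C}} \tilde Y$ almost surely, every $f \in \mathcal{C}$ satisfies $f(\tilde X) \leq f(\tilde Y)$ pointwise, so $\mathbb{E}[f(X)] = \mathbb{E}[f(\tilde X)] \leq \mathbb{E}[f(\tilde Y)] = \mathbb{E}[f(Y)]$, which is precisely $X \leq^{\text{st}}_{\mathcal{C}} Y$. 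Notably the completion condition is unused here, which already hints that $\mathcal{C}^+ = \mathcal{C}^*$ becomes essential only for the reverse implication.

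For $(i) \Rightarrow (ii)$, I would recast the problem as a marginal/optimal-transport feasibility question. Let $\mu_X, \mu_Y$ denote the laws of $X, Y$ on $\mathcal{A}$, and set $K = \{(x,y) \in \mathcal{A} \times \mathcal{A} : x \leq_{\mathcal{C}} y\}$, which is closed by hypothesis. The goal is to construct a coupling $\pi \in \Pi(\mu_X, \mu_Y)$ with $\pi(K) = 1$. The set $\Pi(\mu_X, \mu_Y)$ is convex and, since $\mu_X, \mu_Y$ are tight on the closed set $\mathcal{A}$, weak-$*$ compact by Prokhorov's theorem. The Hahn--Banach separation theorem applied to this compact convex set then reduces existence of such a $\pi$ to a classical dual condition: for every pair of bounded continuous $u, v$ on $\mathcal{A}$ with $u(x) + v(y) \leq 0$ for all $(x,y) \in K$, one must have $\int u \, d\mu_X + \int v \, d\mu_Y \leq 0$.

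The final step exploits the preorder structure to convert this two-function dual condition into a one-function statement. Given any admissible pair $(u, v)$, set $f(y) := \sup\{u(x) : x \leq_{\mathcal{C}} y\}$. By transitivity of $\leq_{\mathcal{C}}$ the function $f$ is $\leq_{\mathcal{C}}$-monotone; by reflexivity $u \leq f$; and by admissibility of $(u,v)$ we have $f \leq -v$. Therefore $\int u \, d\mu_X + \int v \, d\mu_Y \leq \int f \, d\mu_X - \int f \, d\mu_Y$, so the dual inequality is implied by $\mathbb{E}[f(X)] \leq \mathbb{E}[f(Y)]$ for every such monotone $f$. These monotone functions are exactly the elements of the stochastic completion $\mathcal{C}^+$, and the hypothesis $\mathcal{C}^+ = \mathcal{C}^*$ is precisely what forces the weak domination $X \leq^{\text{st}}_{\mathcal{C}} Y$ (tested only against $f \in \mathcal{C}$) to extend to all $f \in \mathcal{C}^+$. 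For necessity, if the completion is strictly smaller, one reverses the construction: any witness $f \in \mathcal{C}^* \setminus \mathcal{C}^+$ provides a pair of laws with $\mathcal{C}$-domination but no coupling, blocking $(ii)$.

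The main technical obstacle is the duality/completion step in the third paragraph. Making Hahn--Banach rigorous on the cone of bounded continuous functions requires a delicate tightness argument on the possibly non-compact closed set $\mathcal{A}$; identifying the supremum-closure of $\mathcal{C}$ with $\mathcal{C}^+$, and verifying that its elements are measurable and integrable enough to be paired against $\mu_X, \mu_Y$, is the heart of the matter. A secondary difficulty lies in the necessity direction, where one must concretely construct a pair of measures together with a separating $f \in \mathcal{C}^* \setminus \mathcal{C}^+$ that certifies the gap between the two notions of stochastic order.
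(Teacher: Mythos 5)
This statement is not proved in the paper at all: it is quoted verbatim from Marshall--Olkin--Arnold \cite[17.B.6]{Marshall:2011aa} and used as a black box. The only thing the paper proves in this vicinity is \cref{thm:strassen:simplified}, which is obtained by \emph{instantiating} the general theorem with $\mathcal{C}$ equal to the cone of Schur-convex functions and checking that this cone is complete. So there is no in-paper proof to compare against; what you have written is an attempt at the textbook theorem itself.

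On its own terms, your sketch follows the standard route (Kellerer/Strassen duality for the marginal problem, then reduction of the two-function dual condition to a single $\leq_{\mathcal{C}}$-monotone test function via $f(y)=\sup\{u(x): x\leq_{\mathcal{C}} y\}$), and the skeleton is sound: the easy direction, the monotonicity/reflexivity/admissibility facts about $f$, and the role of the completeness hypothesis are all correctly identified. Two caveats. First, the proposal is a plan rather than a proof: the Hahn--Banach separation on $\Pi(\mu_X,\mu_Y)$, the measurability and integrability of the sup-construction $f$ on a non-compact closed $\mathcal{A}$, and the entire necessity direction (exhibiting laws that are $\mathcal{C}$-dominated but admit no coupling when $\mathcal{C}^+\subsetneq\mathcal{C}^*$) are announced as ``the heart of the matter'' but not carried out; these are precisely the places where the argument could fail. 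Second, there is a terminological slip in your key step: the monotone functions produced by the sup-construction are exactly the elements of the \emph{completion} $\mathcal{C}^*$ (all preorder-preserving functions), not of the stochastic completion $\mathcal{C}^+$; the inequality $\Ex{f(X)}\leq\Ex{f(Y)}$ extends from $\mathcal{C}$ to $\mathcal{C}^+$ automatically by definition of $\mathcal{C}^+$, and the hypothesis $\mathcal{C}^+=\mathcal{C}^*$ is what lets you push it the rest of the way to $\mathcal{C}^*$. As written, the sentence ``forces the weak domination to extend to all $f\in\mathcal{C}^+$'' has the roles of the two cones reversed, even though the surrounding logic makes clear you understand the intended mechanism.
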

\subsection{Proof of \cref{thm:strassen:simplified}}\label{sec:Weihnachten}
%
Consider the cone \[\mathcal{C} \coloneqq \set{ \phi\colon \mathcal{A} \to \R |
\phi \text{ is Schur-convex} }\] of real-valued Schur-convex functions on
$\mathcal{A}$. This cone implies a preorder \enquote{$\leq_{\mathcal{C}}$} on
$\mathcal{A}$ by the definition $x \leq_{\mathcal{C}} y
\mathrel{\vcentcolon\Leftrightarrow} \phi(x) \leq \phi(y)$ for all $\phi \in
\mathcal{C}$. One can show that this preorder is the vector majorization
\enquote{$\preceq$} (cf.~\cite[Example~14.E.5]{Marshall:2011aa})\footnote{%
    Alternatively, one checks this manually: The direction $x \preceq y
    \Rightarrow x \leq_{\mathcal{C}} y$ is trivial by the definition of
    Schur-convexity. For $x \leq_{\mathcal{C}} y \Rightarrow x \preceq y$
    consider the $n+1$ Schur-convex functions $z \mapsto \sum_{j \in
    \intcc{i}}\sortDown{z}$ for $i \in \intcc{n}$ and $z \mapsto -\norm{z}_1$.
}.
Now, \enquote{$\leq_{\mathcal{C}}$} being equal to \enquote{$\preceq$} has two
implications:
\begin{enumerate}[(a)]
\item\label{thm:strassen:simplified:imp1}
    The stochastic majorization \enquote{$\leq_{\mathcal{C}}^{\text{st}}$}
    implied by the preorder \enquote{$\leq_{\mathcal{C}}$} is the stochastic
    majorization \enquote{$\leq^{\text{st}}$} from
    \cref{def:stochastic_majorization}
    (cf.~\cite[Definition~17.B.1]{Marshall:2011aa}).
\item\label{thm:strassen:simplified:imp2}
    Since a cone $\mathcal{C}$ is complete if it is maximal with respect to
    functions preserving the preorder \enquote{$\leq_{\mathcal{C}}$}
    (cf.~\cite[Definition~14.E.2]{Marshall:2011aa}), $\mathcal{C}$ is complete
    (Schur-convex functions are by definition the set of all functions
    preserving the majorization preorder).
\end{enumerate}
From \ref{thm:strassen:simplified:imp1} we get that
Condition~\ref{thm:strassen:simplified:cond1} is actually
Condition~\ref{thm:strassen:original:cond1} of \cref{thm:strassen:original}.
The same holds for Condition~\ref{thm:strassen:simplified:cond2}. From
\ref{thm:strassen:simplified:imp2} we get that
$\mathcal{C}=\mathcal{C}^*=\mathcal{C}^+$
(cf.~\cite[Proposition~17.B.3]{Marshall:2011aa}), such that
Conditions~\ref{thm:strassen:simplified:cond1}
and~\ref{thm:strassen:simplified:cond2} are equivalent by
\cref{thm:strassen:original}. This finishes the proof. \qed
%

\subsection{Proof of \cref{thm:anonymous_1step_coupling}}\label{weltanschauung}

Consider the processes $P_{\alpha}$ and $P_{\tilde{\alpha}}$ with the
configurations $c$ and $\tilde{c}$ from the theorem's statement. Let $Y=c'$ and
$X=\tilde{c}'$ denote the configurations resulting after one round of
$P_{\alpha}$ on $c$ and $P_{\tilde{\alpha}}$ on $\tilde{c}$, respectively. Let
$\Theta \coloneqq \alpha(c)$ and $\tilde{\Theta} \coloneqq
\tilde{\alpha}(\tilde{c})$. As observed earlier in \cref{sec:anoagg}, we have
$Y \sim \Mult(n,\Theta_1)$ and $X \sim \Mult(n,\Theta_2)$. By the theorem's
assumption, we have $\Theta \succeq \tilde{\Theta}$. Since, by
\cref{lem:multinomial:schur_convex} (see \cref{sec:techtools}), the
function $\Theta \to \Ex{\phi\bigl(\Mult(n,\Theta)\bigr)}$ is Schur-convex for
any Schur-convex function $\phi$ for which the expectation exists, we get $X
\preceq^{\text{st}} Y$.

Since the configuration space $\mathcal{C}$ is a finite subset of $\R^n$, it is
closed and so is $\set{ (x,y) | x \preceq y }$. We now apply
\cref{thm:strassen:simplified} (Strassen's Theorem, see
\cref{sec:techtools}) to get that there exists a coupling between $X$ and $Y$
such that\footnote{Observe that Strassen's Theorem gives us that $\Pr{ X \preceq Y } = 1$. That is, $X \preceq Y$ holds \emph{almost
surely}. However, since $\mathcal{C}$ is finite, this actually means that $\tilde{c}' = X
\preceq Y = c'$ holds (surely).} $X\preceq Y$. This finishes the proof.
\qed

\subsection{Tools from Drift Theory}\label{verdammten}

\begin{theorem}[Variable Drift Theorem {\cite[Corollary 1.(i)]{LW14}}]
\label{lem:theoverlooked}
Let $(X_t)_{t\ge 0}$, be a stochastic process over some state space $S\subseteq
\{0\}\cup [\xmin,\xmax]$, where $\xmin\ge 0$. Let $h\colon [\xmin,\xmax]\to\R^+$
be a differentiable function. Then the following statements hold for the first
hitting time $T \coloneqq \min\{t\mid X_t=0\}$.
If $\E{X_{t+1}-X_{t} \mid \filt; X_t\ge \xmin} \le -h(X_t)$ and $\frac{d }{d x} h(x)\geq 0$,
then
\begin{align*}
     \E{T\mid X_0}
\leq \frac{\xmin}{h(\xmin)} + \int_{\xmin}^{X_0} \frac{1}{h(y)} \,\mathrm{d}y
.
\end{align*}
\end{theorem}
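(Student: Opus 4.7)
The plan is to reduce this variable‑drift statement to an additive‑drift argument by constructing a potential function $g$ that linearizes the drift. Concretely, I would define
\begin{equation*}
g(x) \coloneqq
\begin{cases}
0 & \text{if } x = 0, \\
\dfrac{\xmin}{h(\xmin)} + \displaystyle\int_{\xmin}^{x} \frac{1}{h(y)}\,\mathrm{d}y & \text{if } x \in [\xmin, \xmax],
\end{cases}
\end{equation*}
which is exactly the conjectured upper bound evaluated at $X_0$. Note that $g$ is well defined on the entire state space $S$ because $X_t$ only takes values in $\{0\} \cup [\xmin, \xmax]$.

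The key technical step is a pointwise comparison: for every $x \in [\xmin, \xmax]$ and every $y \in \{0\}\cup[\xmin,\xmax]$, I claim
\begin{equation*}
g(x) - g(y) \;\ge\; \frac{x-y}{h(x)}.
\end{equation*}
For $y \ge \xmin$ this follows by splitting into the cases $y\le x$ and $y > x$ and using that $1/h$ is non‑increasing (since $h$ is non‑decreasing), so $1/h(z) \ge 1/h(x)$ on $[y,x]$ in the first case and $1/h(z) \le 1/h(x)$ on $[x,y]$ in the second. For $y = 0$, I would split $g(x) = \xmin/h(\xmin) + \int_{\xmin}^{x} 1/h(z)\,\mathrm{d}z$ and bound each piece below by $\xmin/h(x)$ and $(x-\xmin)/h(x)$ respectively, yielding $g(x) \ge x/h(x)$, which is the desired inequality at $y=0$.

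With this comparison in hand, the one‑step drift of the potential is easy: on $\{X_t \ge \xmin\}$,
\begin{equation*}
\E{g(X_t) - g(X_{t+1}) \mid \filt}
\;\ge\; \frac{\E{X_t - X_{t+1} \mid \filt}}{h(X_t)}
\;\ge\; \frac{h(X_t)}{h(X_t)}
\;=\; 1,
\end{equation*}
where the first inequality uses the pointwise bound applied to $x = X_t$ and $y = X_{t+1}$, and the second uses the hypothesis $\E{X_{t+1}-X_t \mid \filt; X_t \ge \xmin} \le -h(X_t)$.

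Finally, I would invoke the standard additive drift theorem (He–Yao/Hajek): if a non‑negative stochastic process $Y_t \coloneqq g(X_t)$ satisfies $\E{Y_t - Y_{t+1} \mid \filt; Y_t > 0} \ge 1$ and $T = \min\{t : X_t = 0\}$ (equivalently $Y_t = 0$), then $\E{T \mid X_0} \le Y_0 = g(X_0)$. Plugging in the definition of $g$ yields exactly the claimed bound. The main obstacle I anticipate is the careful handling of the boundary point $y = 0$ in the pointwise inequality, because the potential is defined piecewise and the integral representation is only directly valid on $[\xmin, \xmax]$; the inequality $g(x) \ge x/h(x)$ needs the monotonicity of $h$ in an essential way, and this is where the hypothesis $\frac{d}{dx} h(x) \ge 0$ is used.
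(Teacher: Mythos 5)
Your proof is correct, but note that the paper does not prove this statement at all: it is imported verbatim as \cite[Corollary 1.(i)]{LW14} and used as a black box. Your argument — linearizing the drift via the potential $g(x) = \xmin/h(\xmin) + \int_{\xmin}^{x} h(y)^{-1}\,\mathrm{d}y$, establishing the pointwise bound $g(x)-g(y) \ge (x-y)/h(x)$ from the monotonicity of $h$ (with the separate check at $y=0$), and then invoking additive drift on the bounded process $g(X_t)$ — is exactly the standard route by which the cited result is established in the drift-analysis literature, so there is nothing to object to.
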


\subsection{Tools for Consensus Processes}

\begin{theorem}[{\cite[Theorem 3.1]{BCNPT16}}]
\label{thm:SanpellegrinoWhoGetsTheJoke}
Let $\varepsilon >0$ be an arbitrarily small constant. Starting from any initial
configuration with $k \leq n^{1/3-\varepsilon}$ colors, \hmajority{3} reaches
consensus w.h.p. in
\[
    \LDAUOmicron{(k^2 \log^{1/2} n + k \log n) \cdot (k+ \log n)}
\]
rounds.
\end{theorem}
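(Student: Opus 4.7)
The plan is to follow a phase-based argument: I would partition the execution into $\LDAUOmicron{k + \log n}$ phases, each of length $L \coloneqq \LDAUOmicron{k^2 \log^{1/2} n + k \log n}$, and show that within every phase, w.h.p., either consensus is reached or the set of supported colors strictly shrinks. Since at most $k$ colors are present initially, the union bound over $\LDAUOmicron{k + \log n}$ phases then yields the claimed runtime, with the extra $\log n$ slack amplifying the per-phase success probability. This is the high-level blueprint attributed to \cite{BCNPT16} in \cref{sec:related_work}, refined so that each phase has the stated length.

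The starting point is the one-round analysis of \hmajority{3}. Writing $x \coloneqq c/n$, one computes $\E{c_i(t+1) \mid c} = n \cdot x_i\bigl(1 + x_i - \norm{x}_2^2\bigr)$ and $\operatorname{Var}(c_i(t+1) \mid c) = \LDAUOmicron{c_i}$, since $c_i(t+1)$ is a sum of $n$ independent Bernoullis each with success probability $\LDAUOmicron{x_i}$. The drift $c_i \cdot (x_i - \norm{x}_2^2)$ is non-positive whenever $x_i \leq \norm{x}_2^2$, and by Cauchy--Schwarz $\norm{x}_2^2 \geq 1/k$, so every color with support below $n/k$ is a supermartingale. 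Within a phase I would track the minimum-support color and split the phase into two sub-phases: a symmetry-breaking sub-phase of length $\LDAUOmicron{k^2 \log^{1/2} n}$, after which some color's support has dropped below $n/k - \LDAUOmega{\sqrt{n/k}}$ w.h.p., and an absorption sub-phase of length $\LDAUOmicron{k \log n}$, in which a variable-drift theorem (e.g., \cref{lem:theoverlooked}) combined with the $\LDAUOmicron{c_i}$ variance bound drives the now-strictly sub-average color to zero.

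The main obstacle is the symmetry-breaking step: in a perfectly balanced configuration the per-color drift vanishes, so there is no first-moment reason for any color to shrink. I would address this by tracking the potential $\Phi(t) \coloneqq \norm{x(t)}_2^2$, which is a sub-martingale whose per-round expected increment comes from the per-color variance contributions $\sum_i \operatorname{Var}(c_i(t+1)/n \mid c) = \Theta(1/n)$. Showing that $\Phi$ grows by $\LDAUOmega{1/k^{3/2}}$ within $L$ rounds (enough to push some color into the negative-drift regime) requires careful concentration, and this is where the condition $k \leq n^{1/3 - \varepsilon}$ enters: it guarantees the polynomial slack needed for the accumulated stochastic imbalance to dominate typical fluctuations. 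Once this imbalance is in place, the negative drift of the lagging color is $\LDAUOmega{c_i / k^{3/2}}$ and standard Azuma--Hoeffding-type concentration, together with the $\LDAUOmicron{c_i}$ variance bound, closes the phase.
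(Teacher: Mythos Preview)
The paper does not prove this theorem. It is quoted verbatim from \cite[Theorem~3.1]{BCNPT16} and appears in the appendix under ``Tools for Consensus Processes'' purely as a black box, to handle Phase~2 of the \hmajority{3} analysis once the number of colors has been reduced below $n^{1/4}$. There is therefore nothing in the present paper to compare your proposal against.

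That said, your high-level blueprint is consistent with what the paper reports about \cite{BCNPT16} in \cref{sec:related_work}: a phase-based argument in which each phase (of length roughly $k^2 \operatorname{polylog} n$) eliminates one color w.h.p. Your refinement into a symmetry-breaking sub-phase (tracking $\norm{x}_2^2$ to escape the balanced regime) followed by an absorption sub-phase (exploiting the supermartingale property of sub-average colors) is plausible, and the role you assign to the hypothesis $k \leq n^{1/3-\varepsilon}$---providing enough slack for accumulated variance to beat fluctuations---is the right intuition. Whether the precise quantitative claims (e.g., the $\LDAUOmega{1/k^{3/2}}$ growth of $\Phi$ within $L$ rounds, or the $\LDAUOmega{c_i/k^{3/2}}$ drift) hold as stated would require checking against \cite{BCNPT16} itself; the present paper gives no details beyond the phase structure.
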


\begin{figure}
\centering
	\includegraphics[scale=0.7]{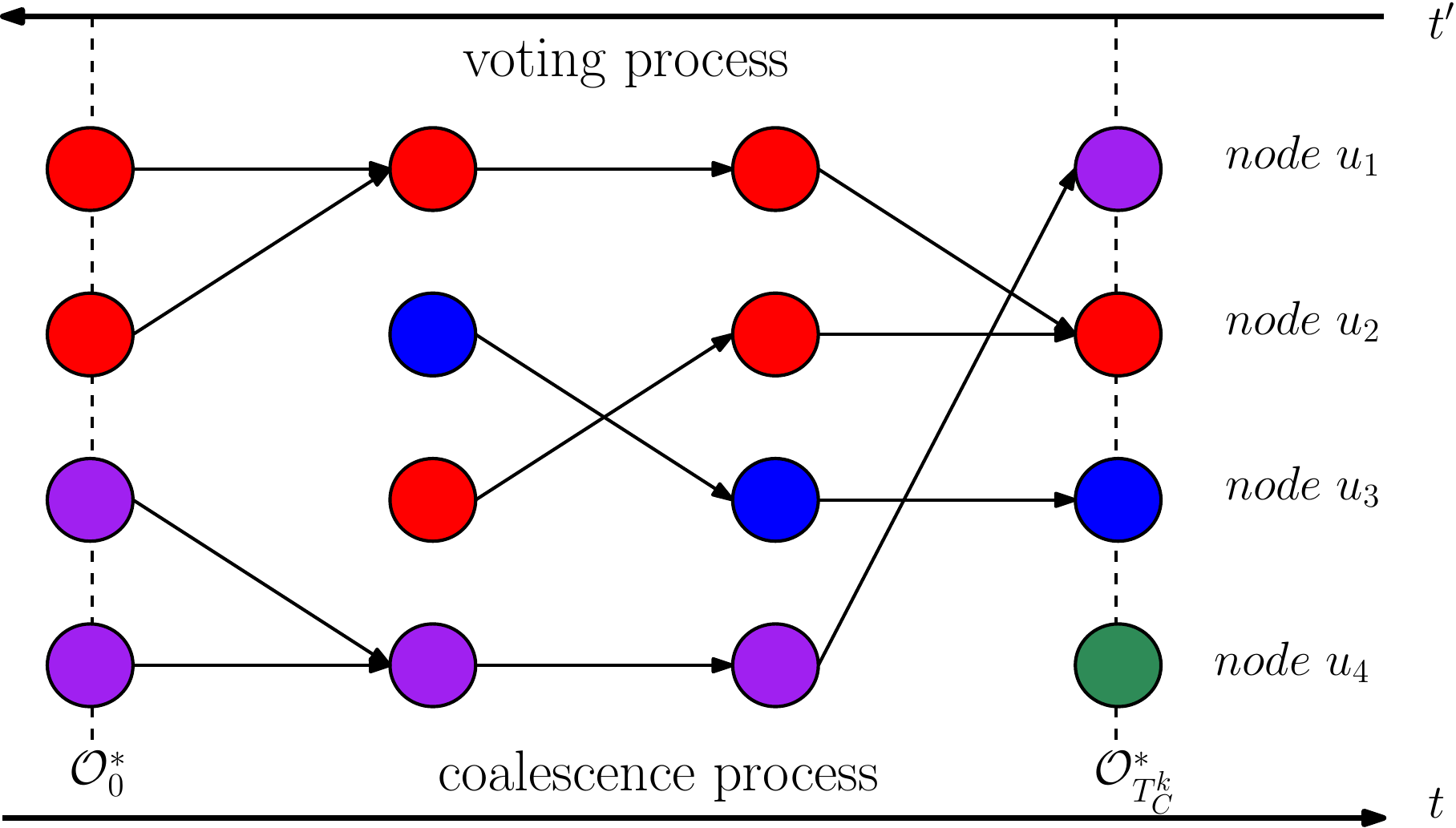}
\caption{Running the coalescence process from right to left (an edge from $u$ to $v$ means that the token on $u$ -if any- moves to $v$) yields that after $T=4$ rounds the number of random walks reduces to $k=2$. Using the same random choices (black arrows) for the voter process and running the process from left to right (an edge from $u$ to $v$ means that $u$ pulls $v$'s opinion) we derive that the number of opinions after $T=4$ rounds is also $2$. This is no coincidence as we show in \cref{lem:coalvote}. }
\label{itout}
\end{figure}
The following lemma uses the high-level idea of the proof presented in \cite[Chapter
14]{AF14} which only considers the case $k=1$.
For the purposes of our proof we would only require a coupling with $T_\mathcal{V}^k \leq T_C^k$, but for
the sake of completeness we show the stronger claim $T_\mathcal{V}^k = T_C^k$.
\begin{lemma}
\label{lem:coalvote}
For any graph $G = (V,E)$, there exists a coupling such that $T_C^k = T_{\mathcal V}^k $.
\end{lemma}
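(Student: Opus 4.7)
The plan is to generalize the $k=1$ duality sketched in \cite[Chapter~14]{AF14} to arbitrary $k$ by an explicit arrow-based coupling. I would sample, once and for all, independent uniform random arrows $\phi_t(u) \in V$ for each round $t \geq 1$ and each node $u$, and use them in both processes with opposite time orientation as in \cref{itout}. Voter is run forward: at round $t$, node $u$ adopts the opinion of $\phi_t(u)$ from round $t-1$ (left-to-right reading of the arrows). For coalescing, I fix a horizon $T$ and run the process with the time-reversed arrows: at the $s$-th coalescing step ($s = 1, \ldots, T$), the token at node $u$ moves to $\phi_{T+1-s}(u)$, and tokens on the same node merge (right-to-left reading).

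The central deterministic identity, proved by induction on $T$, is that two nodes $u, v \in V$ carry the same opinion in voter at time $T$ if and only if their backward-ancestry chains $u \leftarrow \phi_T(u) \leftarrow \phi_{T-1}(\phi_T(u)) \leftarrow \cdots$ and $v \leftarrow \phi_T(v) \leftarrow \cdots$ coincide at some intermediate time $\leq T$. But these backward chains are precisely the forward trajectories of the reverse-arrow coalescing process starting with one token per node, and two chains meeting is exactly the event of the corresponding tokens coalescing. Hence the partition of $V$ into same-opinion classes in voter at time $T$ equals the partition induced by coalesced tokens at step $T$ in the reverse-arrow coalescing, so the two counts agree deterministically. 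Since the reversed sequence $(\phi_T,\phi_{T-1},\ldots,\phi_1)$ has the same joint distribution as $(\phi_1,\ldots,\phi_T)$, the reverse-arrow coalescing process has the same law as the standard coalescing process, and therefore $T_\mathcal{V}^k \stackrel{d}{=} T_C^k$ for every $k$. The coupling asserted in the lemma then follows by a trivial coupling of two equidistributed random variables.

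The main obstacle is justifying the deterministic identity between the forward voter dynamics and the backward coalescing dynamics on a shared arrow realization, since the two processes read the arrows in opposite time orders and compose them differently (voter applies them as $\phi_1 \circ \phi_2 \circ \cdots \circ \phi_T$ along backward-in-time ancestries, while reverse-arrow coalescing applies $\phi_T$ first, then $\phi_{T-1}$, and so on). I would handle this by induction on $T$, matching the two equivalence relations on $V$ (same-opinion in voter versus same-token in coalescing) one round at a time. A secondary subtlety is that the natural coupling above pairs the marginals at each fixed horizon $T$ rather than producing a single canonical coupling of the two full trajectories; however, since the lemma asks only for a coupling with $T_\mathcal{V}^k = T_C^k$ for the chosen $k$, equality in distribution of these stopping times suffices to conclude.
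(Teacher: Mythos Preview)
Your proposal is correct and follows essentially the same arrow-based duality as the paper's proof; the only cosmetic difference is which process reads the shared arrows in reverse (you time-reverse coalescing while the paper time-reverses voter), and you are explicit about the point the paper leaves implicit, namely that the construction yields $T_{\mathcal V}^k \stackrel{d}{=} T_C^k$ from which a coupling with pointwise equality follows. One minor slip: since the lemma is stated for an arbitrary graph $G$, the arrows $\phi_t(u)$ should be uniform over $N(u)$ rather than over all of $V$, matching the paper's $Y_t(u)\sim\operatorname{uniform}(N(u))$.
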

\begin{proof}
For $t\in \mathbb{N}$ and for $u\in V$
define the random variables $Y_{t}(u)$ with $Y_{t}(u) \sim
\operatorname{uniform}(N(u))$, where $\operatorname{uniform}(\cdot)$ denotes the
uniform distribution and $N(u)$ denotes the neighborhood of $u$. Hence,
$Y_{t}(u)=v$ means that $u$ pulls information from node $v$ in step $t$.
In the \textsc{Coalescence} process, the random variable $Y_{t}(u) \in N(u)$,  $t \in [0,
T_C^k)$ captures the transition performed by the random walk which is at $u$ at time
$t$ (if any). In other words, these random variables define the arrows in \cref{itout}.
For the voter process  $Y_{t}(u)=v$ means that in step $t$ node $u$ adopts the opinion of $v$.

Let $X(u)=(X_{0}(u)=u,X_1(u), \dots, X_{T_C^k}(u))$ be the trajectory of
the random walk starting at $u$.
We can thus express
\begin{equation}\label{Honigkuchen}
X_{t}(u) =
\begin{cases}
u & \text{ if $t=0$}\\
Y_{t-1}(X_{t-1}(u)) & \text{ otherwise.}
\end{cases}
\end{equation}
Thus, this trajectory $X(u)$ and the random variable $T_C^k$ are completely determined by the
random variables $\mathcal{Y}=\{ Y_{t}(u) \colon t \in \mathbb{N}, u\in V\}$.

%
%
%
Let $\mathcal{V}_{T_C^k}$ be the \voter process whose starting time $t'=0$ equals the time $T_C^k$ of the coalescence process (see also \cref{itout}).
Let $\mathcal{O}^{*}_{T_C^k - t'}(u)$ be the opinion of $u$ at time $t'$ of
$\mathcal{V}_{T_C^k}$.
For every node $u\in V$ and $t'\in [0,
T_C^k]$ we can thus express
\begin{equation}\label{Oblaten}
\mathcal{O}^{*}_{T_C^k-t'}(u) =
\begin{cases}
u & \text{ if $t'=0$}\\
\mathcal{O}^{*}_{ T_C^k-(t'-1)}(Y_{T_C^k-t'}(u)) & \text{ otherwise.}
\end{cases}
\end{equation}
Note that \eqref{Oblaten} constructs a coupling between the \voter process and the coalescence process through the common usage of the random variables $\mathcal{Y}$ in \eqref{Honigkuchen} and \eqref{Oblaten}.
 In particular, by unrolling \eqref{Honigkuchen} and \eqref{Oblaten} we get
\begin{align*}
X_{T_C^k}(u)&= Y_{T_C^k-1}(Y_{T_C^k-2}(\dots (Y_0(X_0(u)))\dots ))
\stackrel{(a)}{=} Y_{T_C^k-1}(Y_{T_C^k-2}(\dots (Y_0(u))\dots ))\\
O^*_0 (u)&= O^*_{T_C^k}(Y_{T_C^k-1}(Y_{T_C^k-2}(\dots (Y_0(u))\dots )))
\stackrel{(b)}{=}  Y_{T_C^k-1}(Y_{T_C^k-2}(\dots (Y_0(u)\dots )),
\end{align*}
where and $(a)$ we used that $X_0(u)=u$ and in $(b)$ we used that $O^{*}_{T_C^k}(v)=v$ for all $v$.
The above equations imply
\begin{equation}\label{Blau}
X_{T_C^k}(u)=\mathcal{O}^*_{0}(u).
\end{equation}

Let $Z_{t}=\{ X_{t}(u) \colon u\in V\}$
denote the positions of the remaining walks in the coalescence process at time $t$.
Observe that $|Z_0|=n$,
$|Z_{T_C^k}|\leq k$, by definition of $T_C^k$. We have, by \eqref{Blau}, that
\begin{align}\label{weinkeller}
  Z_{T_C^k}
= \{ X_{T_C^k}(u) \colon u\in V\}
= \{ \mathcal{O}^*_{0}(u)\colon u\in V\}
=: \mathcal{O}^*_{0} .
\end{align}
From \eqref{weinkeller}  we infer $|\mathcal{O}^{*}_{0}| = |Z_{T_C^k}|\leq k$, which  implies that
\[T_{\mathcal{V}}^k \leq T_{C}^k.\]
In the reminder we generalize the previous coupling to show that
\[T_{\mathcal{V}}^k = T_{C}^k.\]

%

In particular, we consider the \voter process for all starting position $\tau < T_C^k$ (all nodes have different colors at round $t$)  and show
that the resulting number of opinions is strictly more than $k$.

Let $\mathcal{V}_\tau$ be the \voter process that starts at time $\tau\in [0,T_C^k)$, and
let $\mathcal{O}^\tau_{T_C^k -t'}(u)$ be the opinion of $u$ at time $t'$ of
$\mathcal{V}_\tau$.  For every node $u\in V$ and $t'\in [0,
\tau]$ we have
\begin{equation}\label{Schnitzel}
\mathcal{O}^\tau_{\tau-t'}(u) =
\begin{cases}
u & \text{ if $t'=\tau$}\\
\mathcal{O}^\tau_{\tau-(t'-1)}(Y_{\tau-t'}(u)) & \text{ otherwise.}
\end{cases}
\end{equation}
Similarly as before,
  by unrolling \eqref{Honigkuchen} and \eqref{Schnitzel} we get
\begin{align*}
X_{\tau}(u)&= Y_{\tau-1}(Y_{\tau-2}(\dots (Y_0(X_0(u)))\dots ))
\stackrel{(a)}{=} Y_{\tau-1}(Y_{\tau-2}(\dots (Y_0(u))\dots ))\\
O^\tau_0 (u)&= O^\tau_{\tau}(Y_{\tau-1}(Y_{\tau-2}(\dots (Y_0(u))\dots )))
\stackrel{(b)}{=}  Y_{\tau-1}(Y_{\tau-2}(\dots (Y_0(u)\dots )),
\end{align*}
where and $(a)$ we used that $X_0(u)=u$ and in $(b)$ we used that $O^{\tau}_{\tau}(v)=v$ for all $v$.
%
%
%
 By defining  $\mathcal{O}^0_{T_C^k-t'}=\{
\mathcal{O}^0_{T_C^k-t'}(u) \colon u\in V\},$
from the above equations we get that
\[ X_{\tau}(u)=\mathcal{O}^\tau_{0}(u).\]
 Hence,
\begin{align}\label{wein}
  Z_\tau
= \{ X_{\tau}(u) \colon u\in V\}
= \{ \mathcal{O}^\tau_{0}(u)\colon u\in V\}
=: \mathcal{O}^\tau_{0} .
\end{align}
Since for all $\tau< T_C^k$ we have
$|Z_\tau|> k$,
from \eqref{wein} it follows that $|\mathcal{O}^{\tau}_{0}| =  |Z_{\tau}| > k$
 which yields the claim.
\end{proof}

\subsection{Proof of \cref{lem:reducetok}}
\label{topfpflanze}
We prove the lemma using the well-known duality (via time reversal) between the
\voter process and \emph{coalescing random walks}.
%

%
 It is
well-known (e.g., \cite{AF14}), that $T_{\mathcal V}^1=T_C^1$.
This statement generalizes for all $k \in \intcc{n}$
(see \cref{lem:coalvote} in \cref{sec:techtools} for a proof)
to
\begin{equation}
    T_{\mathcal V}^k=T_C^k.
    \label{Eichhoernchen}
\end{equation}
Thanks to the previous identity, we
can prove the lemma's statement by proving that w.h.p. $T_C^k = \LDAUOmicron{ \frac n
 k \log n}$. To this end, we show that $\Ex{T_C^k} = \LDAUOmicron{n / k}$. In order to get the claimed bound
in concentration, we can apply the following  standard argument.
Consider the process as a sequence of phases, each one  of
length $2\Ex{T_C^k}$. We say that a phase  is  successful when the number of remaining random
walks drops below $n / k$. Thanks to
our bound in expectation above and the   Markov inequality, we easily get that
every phase has probability $\Omega(1)$
to be successful.   So, with high probability, there will be at least one  success within the
first $\LDAUOmicron{\log n}$ phases.

Let $X_t$ denote the number of coalescing random walks at time $t$. We have
$X_0 = n$ and $T_C^k = \min\set{ t \geq 0 | X_t \leq k }$. We seek to apply
drift theory (\cref{lem:theoverlooked} in Appendix
\ref{verdammten}) to derive a bound on $\E{T_C^k}$.
Next, we compute an upper bound on $\E{X_{t+1} - X_t~|~X_t=x}$.

Let us begin assuming that $k$ is any constant.
It holds in general that
$\E{X_{t+1}-X_t \mid X_t\ge 2} \leq -1/n $, since in expectation two
random walks collide w.p.~$1/n$ in a given time step.
Hence we can directly apply\footnote{Technically, one would have to define a
new random variable which is $0$ whenever the number of random walks reduces to
$1$. We illustrate this technicality shortly, for case $k >100$ below.}
\cref{lem:theoverlooked} with parameters $h(x)=1/n$ to reduce from $k$
random walks to $1$, yielding the bound $\E{T_C^k} = O(n/k) = O(n)$, where in
the latter equality we used that $k$ is constant.

We now consider the case where $k$ is larger than a big constant, say $k > 100$.
Assume that in every time step the random walks move in two phases.
Let $W_1$ denote an arbitrary set of $\floor{X_t/2}$ random walks and let $W_2$
denote the remaining ones.
We first look at how the random walks in $W_1$ coalesce, then we consider
the movement of the remaining walks $W_2$.
Let $\mathcal{E}$ be the event that the walks in $W_1$ move onto more than
$\floor{X_t/4}$ distinct nodes.
This would imply that each walk in $W_2$ coalesces with one in $W_1$ with
probability at least ${\floor{X_t/4}}/{n}$.
We thus have
\[
    \E{X_{t+1}~|~X_t=x, \mathcal{E}}
    \leq x- \ceil{x/2} \cdot \frac{\floor{x/4}}{ n} \leq x-  \frac{x^2}{10 n}.
\]
Moreover, conditioning on $\overline{\mathcal{E}}$ implies that there were at
least $\floor{X_t/2}-\floor{X_t/4}$ collisions during the first phase. Thus,
\[
    \E{X_{t+1}~|~X_t=x, \overline{\mathcal{E}}} \leq x- (\floor{x/2}-\floor{x/4})\leq x-\frac{x^2}{10 n}.
\]
Hence, by law of total expectation,
\begin{align*}
    \E{X_{t+1}~|~X_t=x} &= \E{X_{t+1}~|~X_t=x,\mathcal{E}}\Pr{\mathcal{E}} +
        \E{X_{t+1}~|~X_t=x,\overline{ \mathcal{E}}}\Pr{\overline{\mathcal{E}}} \\
    &\leq x-  \frac{x^2}{10 n} .
\end{align*}
In order to apply \cref{lem:theoverlooked}, we define the random variables
$(Y_t)_{t\geq 0}$ as follows
\[
Y_t =\begin{cases}
X_t & \text{if $X_t > k$,}\\
0 & \text{otherwise.}
\end{cases}
\]

Let $T^*= \{ t\geq 0~|~ Y_t= 0\}$. Since by construction we have $Y_t= X_t$
for $t < T_C^k$ and $Y_{T_C^k}=0$ otherwise, it follows that
\begin{equation}
    T_C^k=T^*.
    \label{fremdschaemen}
\end{equation}
Therefore,
\[
    \E{Y_{t+1}~|~Y_t=y, Y_t >  k} \leq y - \frac{y^2}{10n}.
\]

We can thus apply \cref{lem:theoverlooked} for the random variables $(Y_t)_{t\geq
0}$ with $\xmin = k, \xmax =n$, and $h(x)=\frac{x^2}{10n}$, obtaining
\begin{align}
\E{T^*} &\leq \frac{k}{k^2/(10n)} + \int\limits_{k}^{n} \frac{1}{h(u)}
    \leq \frac{10n}{k} + 10n\left(-\frac{1}{n}-\left(-\frac{1}{k}\right)\right) \leq 20\frac nk.
    \label{fred_axt}
\end{align}
Finally, from \eqref{Eichhoernchen}, \eqref{fremdschaemen} and \eqref{fred_axt} we get
\begin{equation}
    \E{T_{\mathcal V}^k} = \E{T_C^k} = \E{T^*} \leq 20 \frac nk,
    \label{}
\end{equation}
concluding the proof.
\qed

\subsection{Proof of \cref{thm:3maj}}
\label{sec:3majbound:finishing}

Consider any initial configuration $c \in \mathcal{C}$. By applying
\cref{lem:reducetok} for $k = n^{1/4}$, we get that \voter reduces the number
of remaining colors w.h.p. from initially at most $n$ to $n^{1/4}$ in
$\LDAUOmicron[small]{ n^{3/4}\log^{7/8} n }$ rounds. By \cref{cor:3-majvspolling}, the time it
takes \hmajority{3} to reach some fixed number of remaining colors is dominated
by the time it takes \voter to reach the same number of remaining colors. In
particular, we get that \hmajority{3} also reduces the number of remaining
colors w.h.p. to $n^{1/4}$ in
$\LDAUOmicron[small]{ n^{3/4}\log^{7/8} n }$ rounds. That is, the first phase takes
$\LDAUOmicron[small]{ n^{3/4}\log^{7/8} n }$ rounds.

For the second phase, we apply~\cite[Theorem 3.1]{BCNPT16} (see
\cref{thm:SanpellegrinoWhoGetsTheJoke} in \cref{sec:techtools}) for $k =
n^{1/4} = \LDAUomicron{n^{1/3}}$. This immediately yields that the second phase
takes $\LDAUOmicron[small]{ n^{3/4}\log^{7/8} n }$ rounds, finishing the proof.
\qed

\subsection{Proof of \cref{lem:lowerTCstrong}}\label{sec:Dominik}
Let $T_i=\min \{ t\geq 0 ~|~ c_i(t) > \ell' \}$.
For any fixed opinion $i\leq k$ we show that $\Pr{T_i < n/(\gamma \ell')} \leq
1/n^{2}$, so that, by a union bound over all opinions and using that $T=\min \{
T_i ~|~i\leq k\}$, we obtain $\Pr{T < n/(\gamma \ell')} \leq 1/n$.
Intuitively, we would like to show that, conditioning on $c_i  \leq \ell'$, the
expected number of nodes joining opinion $i$ is dominated by a binomial
distribution with parameters $n$ and $p = (\ell'/n)^2$.
The main obstacle to this is that naïvely applying Chernoff bounds for every time step
yields a weak bound, since with constant probability at each round at least one
color increases its support by a constant number of nodes.
Instead, we consider a new process $\mathcal P$ in which the number $P(t) $ of
nodes supporting color $i$ at time $t$ majorizes $c_i(t)$ as long as $P(t)
\leq \ell'$; we will then show that, after a certain time w.h.p.
$P(t)$ is still smaller than $\ell'$ implying that $\mathcal{P}$ indeed majorizes the original process.
Using the fact that in $\mathcal{P}$ we can simply apply Chernoff
bounds over several rounds, we can finally get $c_i \leq P(t) \leq \ell'$ w.h.p..

Formally, process $\mathcal P$ is defined as follows. $P(0):=\ell$ and $P(t+1)
=P(t)+\sum_{j \leq n} X_j^{(t)}$, where $X_j^{(t)}$ is a Bernoulli random variable with
$\Pr{X_i=1} = p$ and, by a standard coupling, it
is $1$ whenever node $j$ sees two times color $i$ at round $t$ (note that the latter event
happens with probability at most $p$ for any $t<T_i$).
By definition, if $t<T_i$ it holds $c_i(t)\leq \ell'$, which implies that the probability that any node in the original process gets opinion $i$ is at most $p$. Thus, we can couple
\twochoice and $P$ for $t\leq T_i$ so that $c_i(t) \leq P(t)$. This implies
that
\begin{equation}
    T' \coloneqq \min\{ t\geq 0 ~|~ P(t) \geq \ell'\}\preceq T_i.
    \label{eq:majT}
\end{equation}

In the remainder we show that $\Pr{T' < n/(\gamma \ell')}< 1/n^{2}$.
For any round $t+1$, we define $\Delta_{t+1} \coloneqq P(t+1)-P(t) = \sum_{i
\leq n} X_i$.
Observe that
$\Delta_{t+1}\sim \operatorname{Bin}(n, p)$. 
%
%
Let $t_0 = n/(\gamma \ell')$. In the following we bound
\[
    B\coloneqq P( t_0
)-P(0)=\sum_{i=1}^{t_0} \Delta_i.\]
Observe that $B\sim \operatorname{Bin}(t_0 \cdot n, p)$ and thus
$ \E{B} = t_0 \cdot n \cdot p $.

Using Chernoff bounds,
e.g., \cite[Theorem 4.4]{MU05} we derive for any $\gamma \geq 18$
\begin{align}
    \Pr{ P(t_0) \geq \ell' }  &=\Pr{  B \geq \ell' -\ell }   \leq \Pr{B \geq \max\left\{2\E{B} , \frac{\gamma}{2}\log n \right\}} \nonumber\\
    &=\Pr{B \geq \E{B}\cdot \max\left\{2, 1+\frac{\frac{\gamma}{2}\log n}{\E{B}} \right\} } \leq \exp\left( -\frac{ \frac{\gamma}{2} \log n   }{3} \right) \leq 1/n^{3},
    \label{eq:boundP}
\end{align}
where we used that
\begin{align*}
    \max\left\{2\E{B} , \frac{\gamma}{2}\log n \right\}
    &=\max\left\{2t_0 \cdot n \cdot p , \frac{\gamma}{2}\log n \right\}\\
    &\leq \max \left\{ \frac{(\ell')^2}{\gamma \ell'} ,  \frac{\gamma}{2} \log n \right\} \\
    &\leq \max\left\{{\frac{\ell'}{2}}, \frac{\gamma}{2} \log n\right\}\\
        &\leq \frac{\ell'}{2}= \ell'-\ell.
    \label{eq:calc}
\end{align*}
Putting everything together yields
\begin{align}
    \Pr{ T< n/(\gamma\ell') }
    & = \Pr{ T< t_0 }\\
    &\stackrel{(a)}{\leq} n \Pr{T_i< t_0 }\nonumber\\
    &\stackrel{(b)}{\leq} n \Pr{T'< t_0 } \nonumber\\
    &\stackrel{(c)}{\leq} n \Pr{ P(t_0) \geq  \ell'} \stackrel{(d)}{\leq} n^{-2},
    \label{eq:qedlower}
\end{align}

where in $(a)$ we used union bound over all colors, in $(b)$ we used \eqref{eq:majT}, in $(c)$ we used that ``$T'< t_0$''$\implies$ ``$P(t_0) \geq  \ell'$'' and in $(d)$ we used \eqref{eq:boundP}.
This completes the proof.
\qed

\section{Limitations of \cref{thm:anonymous_1step_coupling} }\label{sec:counterexample}
In this section we show that there are
configurations $c \preceq \tilde{c}$ such that $\aggfhmajority{h}(c) \not\preceq
\aggfhmajority{h+1}(\tilde{c})$.
This means that, \cref{thm:anonymous_1step_coupling} is not strong enough
to derive \cref{lem:hmajority_leq_h+1majority}.
Consider the configurations $x\coloneqq (1/2, 1/6, 1/6, 1/6) \preceq (1/2, 1/2, 0, 0) \eqqcolon \tilde{x}$
(for simplicity, we use the fraction vectors $x = c/n$). For symmetry reasons,
we immediately get that $\aggfhmajority{h+1}(\tilde{c}) = (1/2, 1/2, 0, 0) =
\tilde{c}$. However, even for $h = 3$, for the second configuration we get that
the expected fraction of the nodes which adopt the first opinion after one step
is
\begin{equation}
    1           \cdot \binom{3}{0} \cdot {\left( \frac{1}{2} \right)}^3
  + 1           \cdot \binom{3}{1} \cdot {\left( \frac{1}{2} \right)}^2 \cdot \frac{3}{6}
  + \frac{1}{3} \cdot \binom{3}{2} \cdot \frac{1}{2} \cdot \frac{3}{6} \cdot \frac{2}{6}
= \frac{7}{12}
.
\end{equation}
The three terms of the sum on the left hand side correspond to the cases and
probabilities for which the first color is adopted:
\begin{itemize}
\item
    all samples choose color $1$ (probability to win is $1$, number of cases
    $\binom{3}{0}$),
\item
    two samples choose color $1$ (probability to win is $1$, number of cases
    $\binom{3}{1}$), or
\item
    1 sample chooses color $1$ and the other samples choose different colors
    (probability to win is $1/3$, number of cases $\binom{3}{2}$).
\end{itemize}
Thus, for $n$ large enough, with high probability the configuration resulting
from  \hmajority{(h+1)} will not majorize the one resulting from $\hmajority{h}$.

\end{document}